\def\fu{{\mathfrak{u}}}
\def\cD{{\mathcal{D}}}
\def\cH{{\mathcal{H}}}
\def\cP{{\mathcal{P}}}
\def\cU{{\mathcal{U}}}
\def\tr{{\mathrm{Tr}}}
\def\Pos{\mathrm{Pos}}
\def\P{\mathfrak{P}}
\newcommand{\bra}[1]{\ensuremath{\langle #1 |}}
\newcommand{\ket}[1]{\ensuremath{|\, #1 \rangle}}
\newcommand{\bk}[2]{\ensuremath{\langle #1 | #2 \rangle}}
\newcommand{\kb}[2]{\ensuremath{| #1 \rangle\!\langle #2 |}}
\newtheorem{theo}{Theorem}
\newtheorem{conjecture}{Conjecture}
\newtheorem{lem}{Lemma}
\newtheorem{cor}{Corollary}
\theoremstyle{definition}
\newtheorem{re}{Remark}
\newtheorem{define}{Definition}
\newcommand{\be}{\begin{equation}}
\newcommand{\ee}{\end{equation}}
\newcommand{\ra}{\rightarrow}
\newcommand{\bea}{\begin{eqnarray}}
\newcommand{\eea}{\end{eqnarray}}
\newcommand{\beas}{\begin{eqnarray*}}
\newcommand{\eeas}{\end{eqnarray*}}
\newcommand{\R}{\mathbb{R}}
\newcommand{\C}{\mathbb{C}}
\newcommand{\wh}{\widehat}
\newcommand{\wt}{\widetilde}
\newcommand{\nn}{\nonumber}
\newcommand{\ol}{\overline}
\newcommand{\pa}{\partial}
\newcommand{\ti}{\times}
\newcommand{\Li}{{\cal L}}
\newcommand{\cJ}{{\cal J}}
\newcommand{\Ll}{\Li}
\def\la{\langle}
\def\ran{\rangle}
\def\v{\mathbf{v}}
\def\E{\mathbb{E}}
\def\B{\mathbf{B}}
\def\Aff{\mathfrak{A}}
\mathchardef\za="710B  %\alpha
\mathchardef\zb="710C  %\beta
\mathchardef\zg="710D  %\gamma
\mathchardef\zd="710E  %\delta
\mathchardef\zve="710F %\epsilon
\mathchardef\zz="7110  %\zeta
\mathchardef\zh="7111  %\eta
\mathchardef\zvy="7112 %\theta
\mathchardef\zi="7113  %\iota
\mathchardef\zk="7114  %\kappa
\mathchardef\zl="7115  %\lambda
\mathchardef\zm="7116  %\mu
\mathchardef\zn="7117  %\nu
\mathchardef\zx="7118  %\xi
\mathchardef\zp="7119  %\pi
\mathchardef\zr="711A  %\rho
\mathchardef\zs="711B  %\sigma
\mathchardef\zt="711C  %\tau
\mathchardef\zu="711D  %\upsilon
\mathchardef\zvf="711E %\phi
\mathchardef\zq="711F  %\chi
\mathchardef\zc="7120  %\psi
\mathchardef\zw="7121  %\omega
\mathchardef\ze="7122  %\varepsilon
\mathchardef\zy="7123  %\vartheta
\mathchardef\zf="7124  %\varomega
\mathchardef\zvr="7125 %\varrho
\mathchardef\zvs="7126 %\varsigma
\mathchardef\zf="7127  %\varphi
\mathchardef\zG="7000  %\Gamma
\mathchardef\zD="7001  %\Delta
\mathchardef\zY="7002  %\Theta
\mathchardef\zL="7003  %\Lambda
\mathchardef\zX="7004  %\Xi
\mathchardef\zP="7005  %\Pi
\mathchardef\zS="7006  %\Sigma
\mathchardef\zU="7007  %\Upsilon
\mathchardef\zF="7008  %\Phi
\mathchardef\zW="700A  %\Omega
\begin{document}
\title{Wigner's Theorem \\ and geometry of extreme positive maps}

\author{Janusz Grabowski\footnote{email: jagrab@impan.pl} \\
\textit{Faculty of Mathematics and Natural Sciences, College of Sciences}\\
\textit{Cardinal Stefan Wyszy\'nski University,}\\
\textit{W\'oycickiego 1/3, 01-938 Warszawa, Poland}\\
%\textit{Polish Academy of Sciences, Institute of Mathematics,} \\
%\textit{\'Sniadeckich 8, P.O. Box 21, 00-956 Warsaw, Poland}          \\
\\
Marek Ku\'s\footnote{email: marek.kus@cft.edu.pl}\\
\textit{Center for Theoretical Physics, Polish Academy of Sciences,} \\
\textit{Aleja Lotnik{\'o}w 32/46, 02-668 Warszawa,
Poland} \\
\\
Giuseppe Marmo\footnote{email: marmo@na.infn.it}              \\
\textit{Dipartimento di Scienze Fisiche, Universit\`{a} ``Federico II'' di Napoli} \\
\textit{and Istituto Nazionale di Fisica Nucleare, Sezione di Napoli,} \\
\textit{Complesso Universitario di Monte Sant Angelo,} \\
\textit{Via Cintia, I-80126 Napoli, Italy} \\
}
%\date{}
\maketitle

\begin{abstract}

We consider transformation maps on the space of states which are
symmetries in the sense of Wigner. By virtue of the convex nature of the
space of states, the set of these maps has a convex structure. We
investigate the possibility of a complete characterization of
extreme maps of this convex body to be able to contribute to the
classification of positive maps. Our study provides a variant of
Wigner's theorem originally proved for ray transformations in
Hilbert spaces.

\bigskip\noindent
\textit{{\bf MSC 2000:} (Primary 52A20),  (Secondary 81Q70).}

\noindent \textit{{\bf PACS:} 02.40.Ft, 03.65.Fd  }

%\medskip
\noindent \textit{{\bf Key words:} density states, convex sets,
extreme points, positive maps, Wigner's theorem}
\end{abstract}

\section{Introduction}

Symmetries play a very important role in physics, as it has been
stressed by Wigner on several occasions
\cite{wigner49,wigner64,houtappel65}. The way symmetries are
realized depends on the theory under consideration and more
specifically, according to Felix Klein, on the corresponding
geometric structure of the carrier space, these are `kinematical'
rather than `dynamical' symmetries. It is well known that any
description of physical systems requires the consideration of
states and observables along with a pairing among them providing a
real number with a computable probability \cite{mackey63}. In the
Schr\"odinger-Dirac description of Quantum Mechanics one
associates a Hilbert space with a quantum system, states are
identified with rays of this space and observables are a derived
concept -- they are identified with self-adjoint operators, while
symmetries are defined to be bijections among rays which preserve
probability transitions.

In the $C^*$-algebraic approach to Quantum Mechanics, originated
from the Heisenberg picture, observables are identified with real
elements of this $C^*$-algebra, while states are a derived concept
-- they are identified with positive normalized functionals on the
space of observables. The space of observables carries the
structure of a Jordan algebra and this was the point of view of
Kadison to define symmetries as Jordan-algebra isomorphisms
\cite{kadison51,kadison65}. $C^*$-algebras are quite convenient to
deal with the description of composite systems, the dual space of
states turns out to have a rather involved geometrical
structure. In particular, to take into account the distinction
between separable and entangled states on the space of states, one
is obliged to give up linear superposition in favor of convex
combinations. This change of perspective introduces highly
nontrivial problems, specific to the `convex setting'. As shown
elsewhere \cite{gkm05}, in finite dimensions the space of states
turns out to be a stratified manifold with faces of various
dimensions.

The aim of this paper is to deal with symmetries as those
transformations on the space of states which are appropriate for
its geometrical structure. In doing this we end up with yet
another variant of the celebrated Wigner's theorem on the
realization of symmetries as unitary or antiunitary
transformations on the Hilbert space. The literature on this
theorem, which is also available on text books
\cite{gottfried04,esposito04} in addition to the famous book by
Wigner \cite{wigner59}, is huge. We limit ourselves to a partial
list trying to give a sampling of the various approaches which
have been taken in the years
\cite{hagedorn59,lomont63,oraifeartaigh63,bargmann64,roberts69,hunziker72,
bracci75,samuel97,cassinelli97}.

The paper is organized in the following way: in Section
\ref{sec:denstates} we give a short geometrical description of the
set of density states in a finite-dimensional Hilbert space.
Density states form a convex body in the space of Hermitian
operators. The set of affine maps which map a convex set $K$ into
itself, called simply positive maps, is also a convex set in the
space of affine maps. Characterization of positive maps, e.g.\ by
identifying the extremal ones, i.e.\ maps which can not be
decomposed into a nontrivial convex combination of other positive
maps, can lead to a useful description of the underlying set $K$.
Finding extreme points of such maps is, however, a difficult task,
even if we know explicitly the extreme points of $K$. In Section
\ref{sec:positivemaps} we discuss and give examples of positive
maps for which their extremality can be established upon analyzing
the number of extreme points in the image. In Section
\ref{sec:wigner} we connect the obtained results to the Wigner's
theorem expressed in terms of positive maps which are bijective on pure
states. Sections \ref{sec:CP} and \ref{sec:extremeCP} are devoted
to completely positive maps. In particular, we show again how the
number of extreme points in their image establishes their form and
extremality. We conclude with Section \ref{sec:examples}
containing illustrative examples of extreme positive and
completely positive maps in low dimensions.

\section{Density states}\label{sec:denstates}

Let $\mathcal{H}$ be a finite-dimensional Hilbert space,
$\mathrm{dim}\cH=n$, and let $gl(\cH)$ be the space of complex
linear operators on $\cH$. The space $gl(\cH)$ is canonically a
Hilbert space itself with the Hermitian product $\langle
A,B\rangle={\tr}(A^\dag\circ B)$. As in \cite{gkm05}, we shall
treat the real linear space of Hermitian operators on $\cH$ as the
dual space, $\fu^\ast(\cH)$, of the Lie algebra (of antihermitian
operators), $\fu(\cH)$, of the unitary group $\cU(\cH)$. We have
the obvious decomposition $gl(\cH)=\fu(\cH)\oplus\fu^\ast(\cH)$
into real subspaces with a natural pairing between $\fu(\cH)$ and
$\fu^\ast(\cH)$ given by
\begin{equation}\label{}
\langle A, T\rangle^* =i\cdot\tr(AT),\quad
(A,T)\in\fu^\ast(\cH)\times\fu(\cH),
\end{equation}
and a scalar product induced on $\fu^\ast(\cH)$ by the Hermitian
product and given by
\begin{equation}\label{sp}
\langle A, B\rangle_\ast =\tr(AB),\quad A,B\in\fu^\ast(\cH)\,.
\end{equation}
We denote with $\Vert\cdot\Vert_*$ the corresponding norm.

The coadjoint action of $\cU(\cH)$ on $\fu^\ast(\cH)$ reads
\begin{equation}\label{}
A\mapsto UAU^\dagger, \quad A\in\fu^\ast(\cH), \quad U\in\cU(\cH).
\end{equation}

We denote by $\cP(\cH)$ the space of positive semi-definite
operators from $gl(\cH)$, i.e.\ of those $\rho\in gl(\cH)$ which
can be written in the form $\rho=T^\dag T$ for a certain $T\in
gl(\cH)$. It is a cone, since it is invariant with respect to the
homotheties by $\lambda$ with $\lambda\ge 0$. The set of density
states $\cD(\cH)$ is distinguished in the cone $\cP(\cH)$ by the
equation $\text{Tr}(\rho)=1$, so we will regard $\cP(\cH)$ and
$\cD(\cH)$ as embedded in $u^*(\cH)$.

The space $\cD(\cH)$ is a convex set in the affine hyperplane
$u^*_1(\cH)$ in $u^*(\cH)$, determined by the equation
$\text{Tr}(A)=1$. The model vector space for $u^*_1(\cH)$ is
therefore canonically identified with the space of Hermitian
operators with trace $0$. The space $\Aff(u^*_1(\cH))$ of affine
maps of $u^*_1(\cH)$ can be canonically identified with the space
of these linear maps $\Phi\in\Ll(u^*(\cH))$ which preserve the
trace.

It is known that the set of extreme points of $\cD(\cH)$ coincides
with the set $\cD^1(\cH)$ of pure states, i.e.\ the set of
one-dimensional orthogonal projectors $\kb{x}{x}$, and that every
element of $\cD(\cH)$ is a convex combination of points from
$\cD^1(\cH)$. The space $\cD^1(\cH)$ of pure states can be
identified with the complex projective space $P\cH\simeq\mathbb{C}
P^{n-1}$ \textit{via} the projection
$$\cH\setminus\{ 0\}\ni
x\mapsto\frac{\kb{x}{x}}{\Vert x\Vert^2} \in\cD^1(\cH),
$$
which identifies the points of the orbits of the
$\mathbb{C}\setminus\{ 0\}$-group action by complex homotheties.

If we choose an orthonormal basis $e_1,\dots,e_n$ in $\cH$, we can
identify $u^*(\cH)$ with the real vector space $u^*(n)$ of
Hermitian $n\ti n$ matrices, $u^*_1(\cH)$ with the affine space of
Hermitian $n\ti n$ matrices with trace 1, $U(\cH)$ with the group
$U(n)$ of unitary matrices, $\cD(\cH)$ with $\cD(n)$ - the convex
body of density $n\ti n$ matrices, etc. Recall that the dimension
of $u^*_1(n)$ is $n^2-1$ and the dimension of $u^*(n)$ is $n^2$.

Almost all above can be repeated in the case when $\cH$ is
infinite-dimensional if we assume that all the operators in
question, i.e.\ operators from $gl(\cH)$ and $u^*(\cH)$ are
Hilbert-Schmidt operators (see \cite{gkm07}). The positive
semi-definite operators then, being of the form $AA^\dag$, are
trace-class (nuclear) operators, so density states are trace-class
operators with trace 1. There are some obvious minor differences
with respect to finite dimensions: for instance, the convex set
$\cD(\cH)$ of density states is the closed convex hull of the set
$\cD^1(\cH)$ of pure states, rather than just the convex hull,
etc.

\section{Positive maps of convex sets}\label{sec:positivemaps}

If $K$ is a convex set in a locally convex topological vector
space $E$, then the set $\Pos(K)$ of those continuous linear maps
$\Phi:E\ra E$ which map $K$ into $K$ is a convex set in the (real)
vector space $\Ll(E)$ of all continuous linear maps from $E$ into
$E$. We will refer to elements of $\Pos(K)$ as to {\it linear
$K$-positive maps}, or simply to {\it linear positive maps}, if
$K$ is determined.

If $K$ is compact, then, due to the Krein-Milman Theorem, it is
the closed convex hull of the set $K^0$ of its {\it extreme
points} (points which are not interior points of intervals
included in $K$), $K=\overline{con}(K^0)$. In this sense, compact
convex sets $K$ are completely determined by their extreme points.

However, it should be made clear from the beginning that the
concepts of convex set, positive map, etc., are taken from the
affine rather than linear algebra and geometry. In an affine space
$\E$, one can subtract points, $x=p-p'$, to get vectors of the
model vector space $E=\v(\E)$, or add a vector to a point,
$p=p'+x$, to get another point, but there is no distinguished
point that serves as the origin. More generally, in affine spaces
we can take {\it affine combinations of points}, i.e. combinations
$\sum_i\lambda_ip_i$ such that $\sum_i\zl_i=1$. If all $\zl_i$ are
non-negative the corresponding affine combination is just a {\it
convex combination}. We say that points $p_0,\dots,p_r\in\E$ are
{\it affinely independent} if none is an affine combination of the
others. This is the same as to say that $p_1-p_0,\dots,p_r-p_0\in
E$ are linearly independent vectors.

Convex sets in our approach will live in affine spaces. In this
sense the Krein-Milman Theorem tells us something about compact
convex sets in affine spaces modeled on locally convex linear
spaces.

One can think that the problem is artificial, since by choosing a
point in an affine space as the origin we end up in the model
vector space. However, choosing a point is an additional
information put into the scheme which changes our setting. The
situation is like in a gauge theory, where we can fix a gauge. But
a fixed gauge has, in general, no physical interpretation, so we
rather try to use gauge-invariant objects.

The second instance of affine space presence is the fact that in
many situations, even when we work in a true linear space, it
makes much more sense to admit that positive maps are affine. Note
that affine maps on an affine hyperspace $\E$ of a linear space
$\wh{\E}$ come exactly from linear maps in $\wh{\E}$ which
preserve $\E$. On the other hand, every affine space (or even
affine bundle) $\E$ can be canonically embedded in a linear space
(vector bundle) $\wh{\E}$ as an affine hyperspace (affine
hyperbundle). We refer to \cite{GGU2004,GGU2007} for the
corresponding theory with interesting applications to
frame-independent formulations of some problems in Analytical
Mechanics.

\begin{define} (a) Let $\E_i$ be a real affine space modeled on a
locally convex topological real vector space $E_i=\v(\E_i)$,
$i=1,2$. We say that a map $\Phi:\E_1\ra\E_2$ is an {\it affine
map} if there is a continuous linear map $v(\Phi):E_1\ra E_2$ such
that for any $p\in \E_1$ and any $x\in E_1$, we have
$\Phi(p+x)=\Phi(p)+v(\Phi)(x)$, where $p\mapsto p+x$ is the
natural action of $E_1$ on $\E_1$. The space of affine maps from
$\E_1$ to $\E_2$ will be denoted by $\Aff(\E_1,\E_2)$. If
$\E_1=\E_2=\E$, for the space of affine maps on $\E$, i.e. for
$\Aff(\E,\E)$ we will write shortly $\Aff(\E)$.

(b) Let $\Aff(\E)$ be the space of all affine maps on $\E$ and let
$K$ be a convex set in $\E$. By {\it positive maps on $K$} (or
simply {\it positive maps} if there is no ambiguity about $K$) we
understand these affine maps $\Phi\in\Aff(\E)$ which map $K$ into
$K$. The set of all positive maps on $K$ will be denoted by
$\P(K)$.

(c) By a {\it convex body} we will understand a compact convex set
$K$ with non-empty interior in a finite-dimensional Euclidean
affine space $\E$.
\end{define}

Note that the set $\cD(\cH)$ of density states for quantum
systems with a finite number of levels is an example of a convex
body, as it is canonically embedded in the Euclidean affine space
$u^*_1(\cH)$ of Hermitian operators with trace 1 -- an affine
hyperspace of $u^*(\cH)$.

It is easy to see that for a compact convex set $K$ in a
finite-dimensional affine space $\E$ the closed convex hull
$\ol{con}(K^0)$ is just the convex hull $con(K^0)$ if only
$K^0\subset K$ is closed, and that the convex set of positive maps
$\P(K)$ is again a compact convex set, this time in $\Aff(\E)$.
Note that $\Aff(\E)$ is canonically an affine space modeled on the
vector space $\Aff(\E,E)$ of affine maps from $\E$ into $E$.
Moreover, if $\E$ is just a vector space, $\E=E$, the space
$\Aff(E)$ is a vector space with a canonical decomposition
$\Aff(E)=\Ll(E)\oplus E$, due to the fact that we can write any
affine map $\Phi:E\ra E$ uniquely in the form
$\Phi(x)=v(\Phi)(x)+x_0$ for some $v(\Phi)\in\Ll(E)$ and $x_0\in
E$.

\subsection{Fix-extreme positive maps}

In general it is not easy to find extreme points $\P(K)^0$ of the
convex set of positive maps $\P(K)$, even if extreme points of the
convex body $K$ are explicitly known. This is exactly the case of
the convex bodies $\P(\cD(\cH))$ of positive maps in Quantum
Mechanics.

On the other hand, extremality of some positive maps can be
established relatively easy in the case of maps with many extreme
points in the image, as each extreme point in the image fixes
partially the map. This is based on the observation that, for
$\Phi\in\P(K)$, if $p_0\in K^0$ is the image $p_0=\Phi(p)$ for
some $p\in K$, then $\Phi_i(p)=p_0$ for any $\Phi_i$ of a
decomposition $\Phi=\sum_i\zl_i\Phi_i$ into a convex combination
of $\Phi_i\in\P(K)$. Indeed, as $p_0=\Phi(p)=\sum_i\zl_i\Phi_i(p)$
is a decomposition of the extreme point $p_0$ into a convex
combination of points $\Phi_i(p)\in K$, then $\Phi_i(p)=p_0$. This
immediately implies the following.

\begin{theo}\label{t0} Let $K$ be a compact convex set in an $n$-dimensional real affine space. If a positive map $\Phi\in\P(K)$
has $n+1$ affinely independent extreme points in the image
$\Phi(K)$ of $K$, then the map $\Phi$ is extreme positive,
$\Phi\in\P(K)^0$.
\end{theo}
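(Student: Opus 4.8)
The plan is to exploit the observation made just before the statement: if $\Phi\in\P(K)$ and $\Phi = \sum_i \lambda_i \Phi_i$ is any convex decomposition with $\Phi_i\in\P(K)$ and $\lambda_i>0$, then whenever $p_0 = \Phi(p)$ is an extreme point of $K$ for some $p\in K$, each $\Phi_i$ must also send $p$ to $p_0$, i.e.\ $\Phi_i(p) = p_0$. So suppose $\Phi$ has $n+1$ affinely independent extreme points $q_0,\dots,q_n$ in its image $\Phi(K)$, say $q_j = \Phi(p_j)$ for suitable $p_j\in K$. Applying the observation to each $j$, we get $\Phi_i(p_j) = q_j = \Phi(p_j)$ for all $i$ and all $j=0,\dots,n$.

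The next step is to upgrade ``agreement on $n+1$ affinely independent points'' to ``agreement everywhere,'' using that the $\Phi_i$ are affine. First I would check that $p_0,\dots,p_n$ are themselves affinely independent: an affine map carries an affine combination to the corresponding affine combination, so if the $p_j$ satisfied a nontrivial affine relation, their images $q_j$ would satisfy the same relation, contradicting affine independence of the $q_j$. Since $\E$ is $n$-dimensional, $n+1$ affinely independent points form an affine basis (an affine frame) for $\E$: every point $x\in\E$ can be written uniquely as $x = \sum_j \mu_j p_j$ with $\sum_j \mu_j = 1$. Then for each $i$,
\[
\Phi_i(x) = \Phi_i\Bigl(\sum_j \mu_j p_j\Bigr) = \sum_j \mu_j \Phi_i(p_j) = \sum_j \mu_j q_j = \sum_j \mu_j \Phi(p_j) = \Phi(x),
\]
so $\Phi_i = \Phi$ for every $i$. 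Hence the only convex decompositions of $\Phi$ within $\P(K)$ are trivial, which is exactly the statement that $\Phi\in\P(K)^0$.

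The one genuinely substantive point — and the place I'd be most careful — is the passage from ``a map agreeing with $\Phi$ on an affine basis'' to ``a map equal to $\Phi$'': this is precisely where affineness of the $\Phi_i$ (guaranteed by $\Phi_i\in\Aff(\E)$) is used, together with the fact that $n+1$ affinely independent points in an $n$-dimensional affine space actually span the space affinely. Everything else is the bookkeeping already supplied in the paragraph preceding the theorem, plus the elementary fact about affine maps being determined by their values on an affine frame, which is the affine analogue of a linear map being determined on a basis. No compactness of $\P(K)$ or Krein--Milman is needed here; compactness of $K$ is only implicitly relevant insofar as it guarantees $K^0\neq\emptyset$ and that the hypothesis can be met, but the argument itself is purely affine-algebraic.
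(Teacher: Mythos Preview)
Your proof is correct and follows essentially the same approach as the paper's: use the preceding observation to pin down each $\Phi_i$ at the $n+1$ preimage points, then invoke the fact that an affine map on an $n$-dimensional affine space is determined by its values on $n+1$ affinely independent points. You are in fact slightly more careful than the paper, which tacitly assumes the preimage points $p_j$ are affinely independent; you supply the one-line justification (an affine map pushes forward affine relations, so affine dependence of the $p_j$ would force affine dependence of the $q_j$).
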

\begin{proof} Let $q_i\in K$, $i=1,\dots,n+1$, be such that $p_i=\Phi(q_i)$ are extreme and affinely independent
and assume that we have a decomposition $\Phi=t\Phi_0+(1-t)\Phi_1$
for certain $\Phi_0,\Phi_1\in\P(K)$ and $0<t<1$. According to the
observation preceding the above theorem,
$\Phi_0(q_i)=\Phi_1(q_i)=\Phi(q_i)=p_i$ for all $i+1,\dots,n+1$.
But an affine map from a $n$-dimensional affine space is
completely determined by its values on $n+1$ affinely independent
points, so $\Phi_0=\Phi_1=\Phi$.
\end{proof}

The extreme positive maps $\Phi$ described in the above theorem
(with $n+1$ affinely independent extreme points in the image
$\Phi(K)$) will be called {\it fix-extreme positive maps}.
\begin{cor}\label{c1} For any convex body $K$ a positive map $\Psi$ which has all extreme points in $\Psi(K)$
is extreme positive. In particular, the identity map is always an
extreme positive map.
\end{cor}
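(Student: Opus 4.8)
The plan is to deduce this directly from Theorem \ref{t0}. Let $K$ be a convex body in an $n$-dimensional real affine space $\E$, so by definition $K$ is compact, convex, with non-empty interior. Suppose $\Psi\in\P(K)$ has the property that every extreme point of $K$ lies in the image $\Psi(K)$; equivalently $K^0\subseteq\Psi(K)$. I want to produce $n+1$ affinely independent extreme points in $\Psi(K)$ and then invoke Theorem \ref{t0}.

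First I would observe that since $K$ has non-empty interior in the $n$-dimensional affine space $\E$, the affine span of $K$ is all of $\E$; hence the affine span of $K^0$ is also $\E$, because by Krein--Milman (or, in finite dimensions, Minkowski's theorem, as noted in the excerpt) $K=\mathrm{con}(K^0)$, and the affine span of a convex hull equals the affine span of the generating set. An affine subset of $\E$ that spans $\E$ must contain $n+1$ affinely independent points; therefore one can pick extreme points $p_1,\dots,p_{n+1}\in K^0$ that are affinely independent. By hypothesis each $p_i\in K^0\subseteq\Psi(K)$, so $\Psi$ has $n+1$ affinely independent extreme points in its image. Theorem \ref{t0} then yields $\Psi\in\P(K)^0$.

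For the second assertion, the identity map $\mathrm{id}_\E$ certainly maps $K$ into $K$, and its image $\mathrm{id}(K)=K$ contains all of $K^0$ trivially; so the identity satisfies the hypothesis of the first part and is therefore extreme positive.

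I do not expect a serious obstacle here — the corollary is essentially a repackaging of Theorem \ref{t0}. The only point that needs a moment's care is the extraction of $n+1$ affinely independent extreme points, i.e.\ the claim that $\mathrm{aff}(K^0)=\E$; this rests on the convex-body hypothesis (non-empty interior) together with the fact that extreme points generate $K$ under convex combinations, both of which are available from the material preceding the statement.
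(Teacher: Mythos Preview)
Your proposal is correct and matches the paper's intent: the corollary is stated in the paper without proof, as an immediate consequence of Theorem~\ref{t0}, and your argument supplies exactly the missing step of extracting $n+1$ affinely independent extreme points from $K^0$ via the non-empty-interior hypothesis and $K=\mathrm{con}(K^0)$. There is nothing to add.
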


\subsection{Example: the closed unit ball in $\mathbb{R}^n$}

\begin{theo}\label{t1}
Fix-extreme positive maps of unit balls in Euclidean vector spaces
are orthogonal transformations.
\end{theo}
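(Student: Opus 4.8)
The plan is to show that a fix-extreme positive map $\Phi$ of the closed unit ball $B \subset \R^n$ is a linear orthogonal map. First I would pin down what the hypothesis gives: by Theorem \ref{t0}, $\Phi$ being fix-extreme means there are $n+1$ affinely independent points $q_1,\dots,q_{n+1} \in B$ whose images $p_i = \Phi(q_i)$ are extreme points of $\Phi(B)$ and are themselves affinely independent. The key observation is that the extreme points of $B$ are exactly the unit sphere $S^{n-1} = \partial B$, and since $\Phi(B) \subseteq B$, any extreme point of $\Phi(B)$ that happens to lie in $\partial B$ must be an extreme point of $B$ as well; conversely I need to argue the $p_i$ in fact lie on the sphere. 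An affine map sends extreme points of the image into... actually the cleanest route: an extreme point of $\Phi(B)$ is by definition not in the interior of any segment inside $\Phi(B)$, and because $\Phi(B) \subseteq B$ with $B$ having the sphere as its extreme set, an extreme point of $\Phi(B)$ on $\partial B$ gives a point of $S^{n-1}$; I must also check these cannot lie in the interior of $B$, which follows because a relative-interior point of $B$ can be written as a midpoint of a small segment contained in $B$, and if it is in $\Phi(B)$ one has to be more careful — so the honest statement is that the $p_i$ are extreme points of $\Phi(B)$, hence also extreme points of $B$ (since $\Phi(B)\subseteq B$ and any point of $\Phi(B)$ that is extreme in $\Phi(B)$ but interior to $B$ would still be fine), so let me instead use: extreme points of $\Phi(B)$ that are affinely independent and $n+1$ in number must include points on $\partial B$, and in fact an affine image of $B$ whose extreme points are affinely spanning forces $v(\Phi)$ to be invertible.

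So the second step is to deduce $v(\Phi)$ is invertible: since $q_1,\dots,q_{n+1}$ are affinely independent and $p_1,\dots,p_{n+1}$ are affinely independent, the affine map $\Phi$ restricted to the affine frame is a bijection of frames, which forces the linear part $v(\Phi)$ to be a linear isomorphism of $\R^n$. Then $\Phi(B)$ is a full-dimensional ellipsoid-like convex body contained in $B$, and its boundary $\Phi(\partial B) = \partial \Phi(B)$ consists of extreme points of $\Phi(B)$. Now I invoke that $\Phi(B)\subseteq B$: an affine image of the ball contained in the ball. The third step — the heart of the argument — is to show that such a containment together with the volume/boundary-contact structure forces $\Phi$ to be an isometry. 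Concretely, write $\Phi(x) = Ax + b$ with $A$ invertible. The condition $\|Ax + b\| \le 1$ for all $\|x\| \le 1$ says $\|A\|_{op} + \|b\|$-type bound, more precisely $\sup_{\|x\|\le 1}\|Ax+b\| \le 1$. I want to conclude $b = 0$ and $A \in O(n)$. The subtlety is that the fix-extreme hypothesis gives more than mere containment: it gives $n+1$ points of $\partial B$ in the image that are affinely independent, i.e. the ellipsoid $\Phi(B)$ touches $\partial B$ in enough points to be pinned to $B$ itself.

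The hard part, which I expect to be the main obstacle, is precisely this rigidity step: deducing from "$\Phi(B) \subseteq B$ and $\Phi(B)$ has $n+1$ affinely independent points on $\partial B$" that "$\Phi(B) = B$ and $\Phi$ is orthogonal." The plan is as follows. Let $p_1,\dots,p_{n+1} \in \partial B \cap \Phi(B)$ be affinely independent, and let $z$ be their centroid (or any strictly interior point of their convex hull); then $z$ lies in the interior of $B$. For each $i$, consider the supporting hyperplane of $B$ at $p_i$, which is $\{ y : \langle y, p_i\rangle = 1\}$ since $B$ is the unit ball. Because $\Phi(B) \subseteq B$, this hyperplane also supports $\Phi(B)$ at $p_i$. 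Now I would use a John-ellipsoid / support-function argument: the support function $h_{\Phi(B)}(u) = \langle b, u\rangle + \|A^T u\|$ satisfies $h_{\Phi(B)}(u) \le h_B(u) = 1$ for all unit $u$, with equality at the unit normals $u_i = p_i$ (up to checking these are the correct normals). Having $n+1$ affinely independent contact points $p_i$ and the corresponding equalities $\langle b, p_i\rangle + \|A^T p_i\| = 1$ gives $n+1$ equations; combined with the pointwise inequality $\langle b,u\rangle + \|A^T u\| \le 1$, a convexity/extremality argument (the function $u \mapsto 1 - \langle b,u\rangle - \|A^T u\|$ is nonnegative on the sphere and vanishes at affinely independent points, but it is not affine, so I instead argue: the quadratic form $u \mapsto \|A^T u\|^2 = \langle AA^T u, u\rangle$ is at most $(1 - \langle b,u\rangle)^2$ everywhere and equal at the $p_i$) forces $b = 0$ and $AA^T = I$. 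Once $b=0$ and $\|A^T u\| = 1$ on a spanning set of unit vectors, together with $\|A^T u\|\le 1$ everywhere, one gets $AA^T = I$, i.e. $A$ is orthogonal, and then $\Phi(x) = Ax$ is the desired orthogonal transformation. I would double-check the converse direction as well — that every orthogonal $\Phi$ is indeed fix-extreme, which is immediate since orthogonal maps send the whole sphere onto itself, so all extreme points appear in the image and Corollary \ref{c1} applies.
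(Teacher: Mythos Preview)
Your opening paragraph misreads the definition of \emph{fix-extreme}. In Theorem~\ref{t0} and the paragraph following it, ``$n+1$ affinely independent extreme points in the image $\Phi(K)$'' means $n+1$ affinely independent elements of $K^0$ (here the unit sphere $S^{n-1}$) lying in $\Phi(K)$, not extreme points of the set $\Phi(K)$. So the $p_i$ are on $S^{n-1}$ by hypothesis; the first half of your proposal, where you try to argue this from scratch, is unnecessary and somewhat confused. Once this is cleared up, your deduction that the linear part $A=v(\Phi)$ is invertible and your support-function setup $h_{\Phi(B)}(u)=\langle b,u\rangle+\Vert A^{T}u\Vert\le 1$ with equality at the $n+1$ unit vectors $p_i$ are both correct, and this is a perfectly reasonable alternative to the paper's reduction via the singular-value decomposition.

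The real gap is the step you yourself flag as ``the hard part'': you assert that the inequality $\Vert A^{T}u\Vert^{2}\le(1-\langle b,u\rangle)^{2}$ on $S^{n-1}$, with equality at $n+1$ affinely independent points, ``forces $b=0$ and $AA^{T}=I$'', but you give no argument for $b=0$. Your final sentence correctly shows that \emph{if} $b=0$ then $AA^{T}=I$ (the equality set of the quadratic form $u\mapsto\langle AA^{T}u,u\rangle$ on the sphere is the unit sphere of the top eigenspace, and $n+1$ affinely independent points there force that eigenspace to be all of $\R^{n}$). But eliminating the translation $b$ is exactly the substantive content of the theorem, and nothing in your sketch addresses it: neither the support-function inequality nor the squared version is affine in $u$, so equality at an affine basis does not propagate. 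The paper handles precisely this point by first using SVD to make $A$ diagonal and then proving a Lagrange-multiplier lemma (Lemma~1) showing that if $\Vert Tx+b\Vert^{2}$, with $T$ diagonal positive, has local maxima on $S^{n-1}$ at $n+1$ affinely independent points, then $b=0$ and $T$ is scalar; that lemma is the engine of the proof, and your proposal has no counterpart to it. If you want to stay in the support-function framework without SVD, you could try to exploit the first-order tangency conditions $b+AA^{T}p_i/\Vert A^{T}p_i\Vert=p_i$ (Lagrange multipliers for $h_{\Phi(B)}$ on the sphere; the multiplier turns out to equal $1$ at each contact point), but turning these into $b=0$ still requires an argument of comparable weight to the paper's lemma.
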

The proof of the above theorem will be based on the following
lemma.
\begin{lem}
If the function $F:\R^n\ra\R$,
$$F(x_1,\dots,x_n)=\sum_{i=1}^n\left(\za_ix_i+p_i\right)^2\,,$$
where $\za_i>0$ and $p_i\in\R$, $i=1,\dots,n$, has on the unit
sphere $S^{n-1}=\{x\in\R^n:\sum_ix_i^2=1\}$ local maxima at some
$n+1$ affinely independent points $q_1,\dots,q_{n+1}\in S^{n-1}$,
then $F$ is constant on $S^{n-1}$. In particular,
$p_1=\dots=p_n=0$ and $\za_1=\za_2=\dots=\za_n$.
\end{lem}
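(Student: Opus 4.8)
The plan is to study the critical points of $F$ on the sphere by Lagrange multipliers and to squeeze the required rigidity out of the second--order conditions. Write $F(x)=x^{\top}Dx+2w^{\top}x+\|p\|^2$ with $D=\mathrm{diag}(\za_1^2,\dots,\za_n^2)$ and $w=(\za_1p_1,\dots,\za_np_n)$. If $F|_{S^{n-1}}$ has a local maximum at $q$, there is a multiplier $\zl$ with $\nabla F(q)=2\zl q$, that is
\[
(\za_i^2-\zl)\,q_i=-\za_ip_i\qquad(i=1,\dots,n);
\]
evaluating $F$ along the geodesics $t\mapsto q\cos t+v\sin t$ with $v\perp q$, $\|v\|=1$, one gets in addition the second--order condition
\[
\sum_{i=1}^n(\za_i^2-\zl)\,v_i^2\le 0\qquad\text{for every }v\perp q .
\]

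First I would draw structural consequences from the second--order inequality. If two distinct indices $i\ne j$ had $\za_i^2>\zl$ and $\za_j^2>\zl$, a nonzero $v\in\mathrm{span}(e_i,e_j)$ with $v\perp q$ would violate it; hence \emph{at most one} index $i$ has $\za_i^2>\zl$. Therefore $\zl$ is not smaller than the second largest of the $\za_i^2$, and if $\zl$ is strictly below the largest value $\zb_1:=\max_i\za_i^2$ then $\zb_1$ is attained by a single index. Moreover, if two local maxima $q,q'$ share a multiplier $\zl$ equal to some $\za_i^2$, then $D(q-q')=\zl(q-q')$, so $q-q'$ lies in $E_\zl:=\mathrm{span}\{e_i:\za_i^2=\zl\}$ and all local maxima with that multiplier lie in one affine translate of $E_\zl$.

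Next I would dispose of the multipliers $\zl$ that are \emph{not} eigenvalues of $D$. For such $\zl$ the first--order equations give $q=\big(\za_ip_i/(\zl-\za_i^2)\big)_i$, so $q$ is determined by $\zl$, and $\|q\|^2=1$ becomes the secular equation $h(\zl):=\sum_i\za_i^2p_i^2/(\zl-\za_i^2)^2=1$. Since $h''(\zl)=6\sum_i\za_i^2p_i^2/(\zl-\za_i^2)^4>0$ unless $p=0$, $h$ is strictly convex on each interval of its domain; and a short manipulation identifies the second--order condition, in the range where exactly one $\za_i^2$ exceeds $\zl$, with the inequality $h'(\zl)\ge 0$, while above $\zb_1$ it is automatic. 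Convexity of $h$ then leaves at most one solution of $h=1$ above $\zb_1$ (where $h$ decreases strictly to $0$) and at most one on the increasing branch of $h$ between the two largest eigenvalues; so there are at most two local maxima whose multiplier is not an eigenvalue of $D$.

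Finally I would assemble the pieces. Let $\zb_2$ be the second largest eigenvalue of $D$. The set $\mathcal L$ of local maxima of $F|_{S^{n-1}}$ is then contained in
\[
(\text{an affine translate of }E_{\zb_1})\ \cup\ (\text{an affine translate of }E_{\zb_2})\ \cup\ \{\text{at most two isolated points}\},
\]
where the $E_{\zb_2}$--part and the isolated point with multiplier in $(\zb_2,\zb_1)$ occur only if $\zb_1$ is simple and $p_1\ne0$ (index $1$ realising $\zb_1$), whereas the $E_{\zb_1}$--part occurs only if the $p_i$ with $\za_i^2=\zb_1$ all vanish, so these alternatives exclude one another. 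Using that an affine translate of $E_{\zb}$ has dimension $\mathrm{mult}(\zb)$ and that adjoining a point raises the affine dimension by at most one, one finds $\dim\mathrm{aff}(\mathcal L)\le n-1$ in every case except $D=\zb_1 I$, $p=0$; hence $n+1$ affinely independent local maxima force $\za_1=\dots=\za_n$ and $p=0$, whence $F\equiv\za_1^2$ on $S^{n-1}$. I expect the real work to be exactly this last step: in the borderline configuration $\mathrm{mult}(\zb_1)=n-1$ the local maxima can fill an entire sub--sphere spanning a hyperplane, and one must check that then no isolated maximum sits off that hyperplane --- which holds because $F|_{S^{n-1}}$ reduces to a strictly concave function of the one remaining coordinate whose only other critical points are minima. (When the $q_k$ are moreover \emph{global} maxima --- the case feeding Theorem~\ref{t1} --- there is a shortcut: with $M=\max_{S^{n-1}}F$ the quadratic $M\|x\|^2-F(x)$ is nonnegative on $S^{n-1}$ and vanishes at each $q_k$, so by the $S$--procedure it equals a nonnegative quadratic form in $(x,1)$ plus a multiple of $\|x\|^2-1$; that form annihilates the $n+1$ independent vectors $(q_k,1)$, hence vanishes, and $F$ is constant on the sphere.)
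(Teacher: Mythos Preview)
Your argument is sound and, on the decisive point, more careful than the paper's. Both proofs set up Lagrange multipliers and study the secular function (your $h$, the paper's $G$), but the paper asserts that at any local maximum the full Hessian of $F_\lambda$ is non-positive definite and concludes $\lambda\ge\max_i\alpha_i^2$. That is stronger than what the constrained second-order necessary condition actually gives---it only controls the Hessian on the tangent space $q^\perp$---and the two-dimensional example $\alpha_1=2$, $\alpha_2=1$, $p_1=1$, $p_2=0$ already has a genuine local maximum at $(-1,0)$ with multiplier $\lambda=2<\alpha_1^2$. You use the correct tangential inequality, obtain only $\lambda\ge\beta_2$, and therefore have to treat the extra regimes $\lambda\in(\beta_2,\beta_1)$ and $\lambda=\beta_2$. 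Your identification of the second-order condition in $(\beta_2,\beta_1)$ with $h'(\lambda)\ge 0$ is correct (it falls out of a weighted Cauchy--Schwarz estimate on $q^\perp$), and combined with the strict convexity of $h$ this caps the non-eigenvalue multipliers at two, as you claim. One place to tighten: when $\lambda=\beta_2$ and $\beta_1$ is simple, the very same second-order inequality forces the $E_{\beta_2}$-components of $q$ to vanish (take $v=e_1-(q_1/q_j)e_j$ with $j\in E_{\beta_2}$), so that case yields at most one point rather than a whole $m_2$-dimensional translate; with this correction your bound $\dim\mathrm{aff}(\mathcal L)\le n-1$ holds uniformly, not only in the $\mathrm{mult}(\beta_1)=n-1$ borderline you isolate. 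In short, the paper's route trades a clean dichotomy ($\lambda>\alpha_1^2$ versus $\lambda=\alpha_1^2$) for an unjustified step, whereas your longer analysis---or your $S$-procedure remark for \emph{global} maxima, which is all that Theorem~\ref{t1} actually requires---is what genuinely establishes the Lemma for local maxima as stated.
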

\begin{proof}
We will use the method of Lagrange multipliers and consider the
function
$$F_\zl(x_1,\dots,x_n)=\sum_{i=1}^n\left(\za_ix_i+p_i\right)^2-
\zl\left(\sum_{i=1}^nx_i^2\right)\,.$$ Since $q_j$,
$j=1,\dots,n+1$, are critical points of $F$, when restricted to the sphere,
the coordinates $(x_1,\dots,x_n)$ of each $q_j$ solve the system
of equations
\bea\label{e1}\frac{\pa F_\zl}{\pa x^i}(x)&=&(\za_i^2-\zl)x_i+p_i\za_i=0,\ i=1,\dots,n\,,\\
\sum_{i=1}^nx_i^2&=&1\,.\nn \eea Moreover, as at $q_j$ we have
local maxima, the second derivative of $F_\zl$ must be
non-positive definite that yields $\za_i^2-\zl\le 0$ for all $i$.
We can assume that $\za_1\ge\za_2\ge\cdots\ge\za_n$, so
$\zl\ge\za_1^2$.

Assume first that $\zl>\za_1^2$. Then,
\be\label{s}
x_i=\frac{p_i\za_i}{\zl-\za_i^2},
\ee and
$$G(\zl)=\sum_{i=1}^n\frac{(p_i\za_i)^2}{(\zl-\za_i^2)^2}$$
should be equal to 1. But the function $G(\zl)$ is monotone with
respect to $\za_1^2<\zl<+\infty$, so there is at most one solution
$(x,\zl_0)$ of (\ref{e1}) with $\zl_0>\za_1^2$. It must be
therefore at least $n$ additional solutions with $\zl=\za_1^2$.
Let $k$ be the number of the biggest $\za_i$, i.e.
$\za_1=\za_2=\dots=\za_k>\za_{k+1}$. We get easily from (\ref{e1})
that
\be\label{b}
b_1=\dots=b_k=0.
\ee Hence if $k=n$ then $F$ is constantly $\za_1^2$ on the sphere. It suffices to show now that $k<n$ is not possible.
Indeed, if $x$ is a solution of (\ref{e1}) with $\zl=\za_1^2$, we
get
$$x_i=\frac{p_i\za_i}{\za_1^2-\za_i^2}\,,\quad i=k+1,\dots,n\,,$$
so we can have, together with the solution (\ref{s}) corresponding
to $\zl>\za_1^2$, an affinely independent set of $n+1$ solutions
only if $k=n-1$. But then, due to (\ref{b}), the solution
(\ref{s}) must be
\be\label{s0}x=(0,\dots,0,sgn(p_n))\,, \ee so $p_n\ne 0$ and $G(\zl)$ reduces
to
$$G(\zl)=\frac{(p_n\za_n)^2}{(\zl-\za_n^2)^2},
$$
and
\be\label{G}G(\zl_0)=\frac{(p_n\za_n)^2}{(\zl_0-\za_n^2)^2}=1\,.\ee
On the other hand, the solutions corresponding to $\zl=\za_1^2$
must be of the form
$$\left(x_1,\dots,x_{n-1},\frac{p_n\za_n}{\za_1^2-\za_n^2}\right),
$$
so that we have solutions additional to (\ref{s0}) only if
\be\label{G1}\frac{(p_n\za_n)^2}{(\za_1^2-\za_n^2)^2}< 1\,. \ee
But, as $\zl_0>\za_1^2$, the latter contradicts (\ref{G}):
$$1=\frac{(p_n\za_n)^2}{(\zl_0-\za_n^2)^2}<\frac{(p_n\za_n)^2}{(\za_1^2-\za_n^2)^2}< 1\,.$$
\end{proof}
Now we can prove Theorem \ref{t1}.
\begin{proof}
Let $\B$ be the unit ball in an $n$-dimensional Euclidean vector
space $E$. Let us make an identification of $E$ with $\R^n$ with
the standard Euclidean norm
$$\Vert x\Vert^2=\sum_{i=1}^nx_i^2\,.$$
Let $\Phi$ be a fix-extreme positive map of $\B$, i.e.
$\Phi:\R^n\ra\R^n$ is an affine map, $\Phi(x)=A(x)+p'$, where
$A\in\Ll(\R^n)$ is a linear map of $\R^n$ such that
$\Phi(\B)\subset\B$ and $\Phi(\B)$ has $n+1$ affinely independent
points $\Phi(q'_1),\dots,\Phi(q'_{n+1})$ in the unit sphere
$S^{n-1}$. Of course, we can assume that $q'_1,\dots,q'_{n+1}$ are
extreme points of $\B$, so they lie on the sphere as well. In
particular the map (matrix) $A$ is invertible.

Now we can apply the singular value decomposition to the matrix
$A$ in order to write it in the form $A=O_1\circ T\circ O_2$,
where $O_1,O_2$ are orthogonal matrices and $T$ is a diagonal
matrix with positive entries $\za_1,\dots\za_n>0$ on the diagonal.
Since we can write
$$\Phi(x)=A(x)+b'=O_1\circ T\circ O_2(x)+b'=O_1(T\circ O_2(x)+b)\,,$$
where $O_1(b)=b'$, and since the orthogonal maps preserve $\B$ and
$S^{n-1}$, the map $\Phi_0(x)=T(x)+b$ has the same properties as
$\Phi$: it is a positive map of $\B$, $\Phi_0(\B)\subset\B$ and
$\Phi_0(\B)$ has $n+1$ affinely independent points
$\Phi_0(q_1),\dots,\Phi_0(q_{n+1})$ on the unit sphere $S^{n-1}$,
where $q_j=O_2^{-1}(q'_j)$ are points of the sphere,
$j=1,\dots,n+1$. This means that the function
$$F(x)=\Vert\Phi_0(x)\Vert^2=\sum_{i=1}^n\left(\za_ix_i+p_i\right)^2\,,$$
reduced to the unit sphere takes at $q_1,\dots,q_{n+1}$ local
maxima. Applying the above lemma we conclude that $b=0$ and $F$ is
constant on the sphere, so that $\Phi_0=T$ maps the unit sphere
into the unit sphere. Hence, $T=I$ and $\Phi=O_1\circ O_2$ is
orthogonal.
\end{proof}
\begin{re} Theorem \ref{t1} can be derived also from the results of \cite{gorini76}.
\end{re}

\subsection{Example: an extreme map on the plane fixing two extreme points}

In the present section we want to present a simple example of an
extreme map in two dimensions which is a bijection on its two
extreme points. To this end let us consider the function on the
interval $[1,1]$,
\begin{equation}\label{eq:f}
f(x)=\left(\frac{1-x}{2}\right)^{2}\left(\frac{1+x}{2}\right)^{1/2}+
2\left(\frac{1-x}{2}\right)^{1/2}
\left(\frac{1+x}{2}\right)^{5/2}+\frac{1-x^2}{4}\,.
\end{equation}
The function $f$ is concave, hence the subset $S$ of the $(x,y)$
plane bounded by its graph and the interval $[1,1]$ is convex. Let
us perform a linear transformation of the $(x,y)$ plane,
\begin{equation}\label{eq:lin}
T:(x,y)\mapsto\left(-x,\frac{y}{2}\right)\,.
\end{equation}
Under this transformation $S$ is transformed into the set bounded
by $[1,1]$ and the graph of
\begin{equation}\label{eq:g}
g(x)=\frac{1}{2}f(-x),
\end{equation}
i.e.\
\begin{equation}\label{eq:g1}
g(x)=\frac{1}{2}\left(\frac{1+x}{2}\right)^{2}\left(\frac{1-x}{2}\right)^{1/2}
+\left(\frac{1+x}{2}\right)^{1/2}
\left(\frac{1-x}{2}\right)^{5/2}+\frac{1-x^2}{8}\,.
\end{equation}
Since $f(x)\ge g(x)$ for $x\in[1,1]$, we have $T(S)\subset S$.
Moreover $T$ is a bijection on two extremal points, $(x,y)=(-1,0)$
and $(x,y)=(1,0)$, of $S$. Observe also that $T$ is an extreme
mapping in the sense that for an arbitrary $\alpha\ge 1$ there is
$x\in [-1,1]$ such that $f(x)-\alpha g(x)<0$, i.e.\ the linear
transformation
\begin{equation}\label{eq:lin1}
T_\alpha:(x,y)\mapsto\left(-x,\alpha\frac{y}{2}\right)
\end{equation}
does not map $S$ into $S$.

The above described properties of $f$ and $g$ can be established
by straightforward calculations. Below we illustrate them in
Figures 1-3.
\begin{figure}[H]
\centering
\includegraphics[scale=.5,angle=-90]{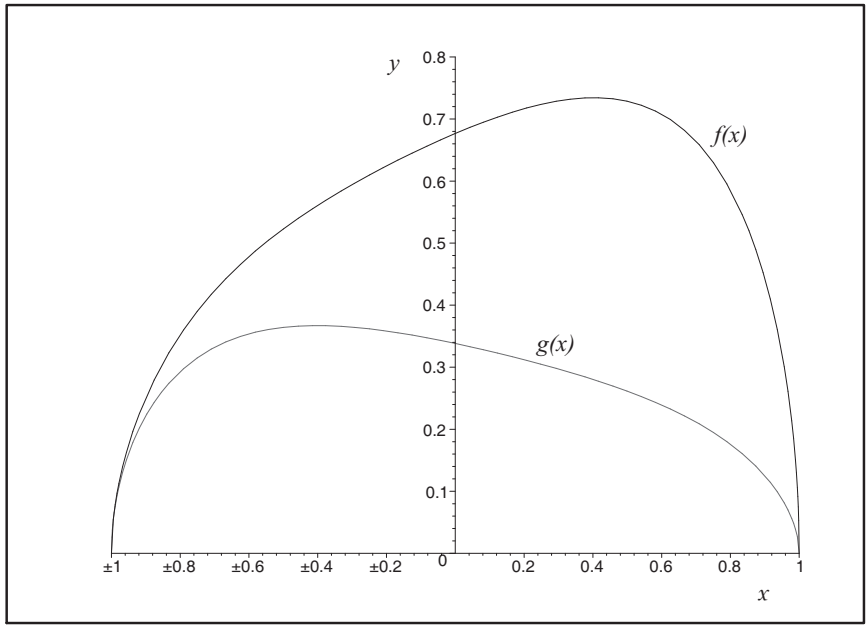}
\caption{Functions $f(x)$ and $g(x)$.} \label{fig:fg}
\end{figure}
\begin{figure}[H]
\centering
\includegraphics[scale=.5,angle=-90]{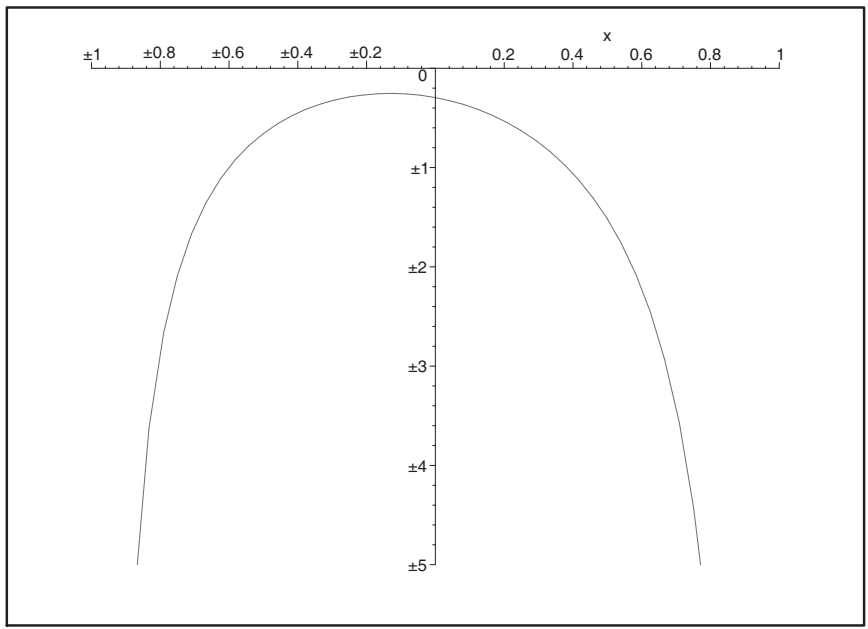}
\caption{Second derivative of $f$.} \label{fig:f2}
\end{figure}
\begin{figure}[H]
\centering
\includegraphics[scale=.5,angle=-90]{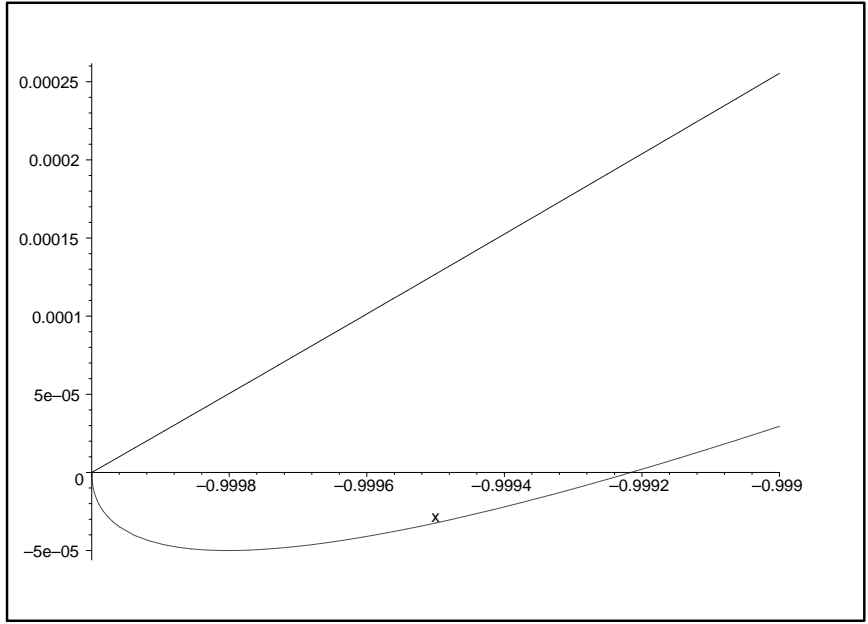}
\caption{Functions $f-g$ (upper curve) and $f(x)-1.01\cdot
g(x)$ (lower curve) in the vicinity of $x=-1$.} \label{fig:fmg}
\end{figure}

\section{Positive maps bijective on pure states - a version of Wigner's
Theorem}\label{sec:wigner}

Before we formulate a version of the Wigner's Theorem
\cite{bargmann64} let us comment on complex antilinear and
antiunitary maps in a Hilbert space. A map $A:\cH\ra\cH$ we call
{\it antilinear} if $A(\za x+\zb y)=\bar{\za}x+\bar{\zb}y$, where
$\bar{\za}$ denotes the complex conjugation of $\za\in\C$. An
antilinear map $U:\cH\ra\cH$ we call {\it antiunitary} if
$\bk{Ux}{Uy}=\bk{y}{x}$ for all $x,y\in\cH$. The adjoint $A^\dag$
of an antilinear map is an antilinear map defined {\it via} the
identity
$$\bk{Ax}{y}=\bk{A^\dag y}{x}\,.$$
Any linear (antilinear) map $A:\cH\ra\cH$ induces a linear (resp.,
antilinear) map $M_A:gl(\cH)\ra gl(\cH)$ which on one-dimensional
maps $\kb{x}{y}$ takes the form $M_A(\kb{x}{y})=\kb{Ax}{Ay}$. For
linear $A$ we can easily represent the map $M_A$ as $M_A(\zr)=A\zr
A^\dag$, while with antilinear maps the situation is a little bit
more complicated.

If an orthonormal basis is chosen then in the Hilbert space $\cH$
we can define a complex conjugation
$$C:\cH\ra\cH\,,\quad C^2=I\,,\quad C\left(\sum_ia_ie_i\right)=\sum_i\bar{a_i}e_i\,.$$
Instead of $Cx$ we will write simply $\bar{x}$. It is clear that
$\bk{x}{y}=\bk{\bar{y}}{\bar{x}}$. If $A$ is a complex linear map
then $\wt{A}=A\circ C$ is antilinear and {\it vice versa}. Since
any continuous complex linear (antilinear) map $A:\cH\ra\cH$
is represented by a (possibly infinite) matrix $(a_{ij})$, where
$A(e_i)=\sum_ja^j_ie_j$, also the transposition $A\mapsto A^T$ is
well defined:
$$A^T(e_i)=\sum_ja^i_je_j,$$
and we extend it to the whole $\cH$ by complex linearity
(antilinearity). For linear $A$ the adjoint map $A^\dag$ can be
then written as
$$A^\dag=C\circ A^T\circ C\,,$$
so that $C\circ A\circ C=A^T$ for linear Hermitian $A=A^\dag$. If
$A$ is antilinear then $\wt{A}=A\circ C$ is linear, so
$$\kb{Ax}{Ay}=\kb{\wt{A}\bar{x}}{\wt{A}\bar{y}}=\wt{A}\circ \kb{\bar{x}}{\bar{y}}\circ(\wt{A})^\dag\,.$$
But, as easily seen,
$$\kb{\bar{x}}{\bar{y}}=C\circ\kb{x}{y}\circ C\,,$$
so that, for Hermitian $\zr$, \be\label{al}
M_A(\zr)=\wt{A}\zr^T(\wt{A})^\dag\,. \ee

The Wigner's Theorem (compare with \cite{bargmann64}) can be now
formulated as follows.
\begin{theo} Let $\psi:\cD^1(\cH)\ra\cD^1(\cH)$ be a bijection of pure states in a Hilbert space $\cH$
preserving the transition probabilities \be\label{tp}
\bk{\psi(\zr_1)}{\psi(\zr_2)}_*=\bk{\zr_1}{\zr_2}_*\,. \ee Then,
there is a unitary $U:\cH\ra\cH$ such that
\be\label{wt}\psi(\zr)=U\zr U^\dag\quad\text{for all pure states }\zr\,, \ee or
\be\label{wt1}\psi(\zr)=U\zr^T U^\dag\quad\text{for all pure states }\zr\,, \ee where $\zr\mapsto\zr^T$ is the
transposition associated with a choice of an orthonormal basis in
$\cH$.
\end{theo}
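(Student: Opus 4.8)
The plan is to combine the fix-extreme mechanism of Corollary~\ref{c1} with a Bargmann-type normalisation of $\psi$. The first observation is that on $\cD^1(\cH)$ condition \eqref{tp} is equivalent to $\psi$ being an isometry for $\Vert\cdot\Vert_*$, since for pure states $\Vert\zr_1-\zr_2\Vert_*^2=\tr(\zr_1^2)+\tr(\zr_2^2)-2\tr(\zr_1\zr_2)=2-2\bk{\zr_1}{\zr_2}_*$. Hence $\psi$ is a homeomorphism of $\cD^1(\cH)$, and, choosing $n^2$ affinely independent pure states $p_0,\dots,p_{n^2-1}$ (they exist because the pure states affinely span $u^*_1(\cH)$), $\psi$ extends to the unique affine map $\Psi$ of $u^*_1(\cH)$ with $\Psi(p_i)=\psi(p_i)$; as the Gram matrices of $(p_i-p_0)_i$ and $(\psi(p_i)-\psi(p_0))_i$ coincide, $\Psi$ is an affine isometry, and it agrees with $\psi$ on all of $\cD^1(\cH)$ because a point of Euclidean space is determined by its distances to an affine frame. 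Then $\Psi$ maps $\cD(\cH)=\mathrm{con}(\cD^1(\cH))$ onto itself and carries every extreme point into its image, so by Corollary~\ref{c1} it is fix-extreme positive; the substance of the theorem is to identify it, for which I would argue directly on $\cD^1(\cH)$.

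I would perform two harmless reductions, each replacing $\psi$ by $M_B\circ\psi$ for a unitary $B$ — harmless because $M_B$ is again a transition-probability-preserving bijection of $\cD^1(\cH)$ and $M_{B_1}\circ M_{B_2}=M_{B_1B_2}$. Fix an orthonormal basis $e_1,\dots,e_n$. Since $\tr(\kb{e_i}{e_i}\kb{e_j}{e_j})=\zd_{ij}$, the images $\psi(\kb{e_i}{e_i})$ are mutually orthogonal rank-one projectors, say $\psi(\kb{e_i}{e_i})=\kb{f_i}{f_i}$ for an orthonormal basis $(f_i)$; replacing $\psi$ by $M_{U_0^\dag}\circ\psi$ with $U_0e_i=f_i$, I may assume $\psi(\kb{e_i}{e_i})=\kb{e_i}{e_i}$ for all $i$. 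Next, for $k=2,\dots,n$ set $v_k=(e_1+e_k)/\sqrt2$; the state $\kb{v_k}{v_k}$ has transition probability $\frac12$ with $\kb{e_1}{e_1}$ and $\kb{e_k}{e_k}$ and $0$ with every other $\kb{e_j}{e_j}$, so $\psi(\kb{v_k}{v_k})=\kb{w_k}{w_k}$ with $w_k=(e_1+\zl_ke_k)/\sqrt2$, $|\zl_k|=1$ (after fixing the phase of $w_k$); replacing $\psi$ by $M_D\circ\psi$ with $D=\mathrm{diag}(1,\ol{\zl_2},\dots,\ol{\zl_n})$, I may further assume $\psi(\kb{v_k}{v_k})=\kb{v_k}{v_k}$ for all $k$.

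The next step is to determine $\psi$ on the dense open set $\cD^1_{>0}=\{\zr\in\cD^1(\cH):\bra{e_1}\zr\ket{e_1}\neq0\}$. For $\zr=\kb{x}{x}$ in it, take the representative $x=\sum_ia_ie_i$ with $a_1>0$ and write $\psi(\zr)=\kb{x'}{x'}$ with $x'=\sum_ia'_ie_i$ normalised so that $a'_1=a_1$ (possible since $|a'_1|=|a_1|$). Preservation of transition probabilities against $\kb{e_i}{e_i}$ gives $|a'_i|=|a_i|$, and against $\kb{v_k}{v_k}$ it gives $|a_1+a'_k|=|a_1+a_k|$, whence $\mathrm{Re}\,a'_k=\mathrm{Re}\,a_k$; together these force $a'_k\in\{a_k,\ol{a_k}\}$ for each $k$. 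When $n\ge3$, the states $\kb{w_{jk}}{w_{jk}}$ with $w_{jk}=(e_1+e_j+e_k)/\sqrt3$ are themselves fixed by $\psi$ (the conclusion just reached, applied to them, gives $w'_{jk}=w_{jk}$ because all their components are equal positive reals), so preservation of the transition probability of $\zr$ against $\kb{w_{jk}}{w_{jk}}$ yields $|a_1+a'_j+a'_k|=|a_1+a_j+a_k|$, which rules out a ``mixed'' choice of conjugations unless $a_j$ or $a_k$ is already real; hence the choice is uniform in $k$. So there is $\epsilon(\zr)\in\{+1,-1\}$ with $x'=x$ if $\epsilon(\zr)=+1$ and $x'=\ol{x}$ if $\epsilon(\zr)=-1$, i.e. $\psi(\zr)=\zr$ or $\psi(\zr)=\kb{\ol{x}}{\ol{x}}=C\zr C=\zr^T$. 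For $n=2$ the same holds with $k=2$ only and no consistency step.

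Finally I would globalise the alternative. The sets $A_+=\{\zr\in\cD^1_{>0}:\psi(\zr)=\zr\}$ and $A_-=\{\zr\in\cD^1_{>0}:\psi(\zr)=\zr^T\}$ are closed (each of $\psi$, the identity and the transposition being continuous) and cover $\cD^1_{>0}$, and off the ``real locus'' $R=\{\zr=\zr^T\}$ they are disjoint. Identifying $\cD^1_{>0}$ with $\C^{n-1}$ via $\zr\mapsto(a_2/a_1,\dots,a_n/a_1)$, the set $R$ becomes $\R^{n-1}$; for $n\ge3$ the complement $\cD^1_{>0}\setminus R$ is connected, so one of $A_\pm$ is already all of $\cD^1_{>0}$. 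For $n=2$, $\cD^1_{>0}\setminus R$ splits into $\{\mathrm{Im}(a_2/a_1)>0\}$ and $\{\mathrm{Im}(a_2/a_1)<0\}$, each having a constant sign; were the two signs different, a state $\zr$ in the negative piece would satisfy $\psi(\zr)=\zr^T$, while $\zr^T$ lies in the positive piece, so $\psi(\zr^T)=\zr^T=\psi(\zr)$ with $\zr\neq\zr^T$, contradicting injectivity of $\psi$. Thus in all cases $\psi=\mathrm{id}$ or $\psi=(\cdot)^T$ on $\cD^1_{>0}$, and by continuity on all of $\cD^1(\cH)$; undoing the two reductions gives $\psi(\zr)=U\zr U^\dag$ or $\psi(\zr)=U\zr^TU^\dag$ with $U=U_0D^\dag$ unitary, which are \eqref{wt} and \eqref{wt1}. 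I expect the main obstacle to be exactly this last globalisation — excluding a map that is ``unitary on part of $\cD^1(\cH)$ and antiunitary on the rest'' — along with the bookkeeping for pure states having $a_1=0$ and for keeping the phase normalisations coherent, both of which the density/continuity argument is there to absorb.
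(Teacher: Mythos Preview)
The paper does not prove this theorem at all: it is stated as the classical Wigner Theorem with a reference to Bargmann~\cite{bargmann64}, and is then invoked as a black box in the proof of Theorem~5 (the implication $(b)\Rightarrow(c)$). So there is no ``paper's own proof'' to compare against; you have supplied a complete argument where the authors simply quote the literature.

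Your proof is essentially the Bargmann proof and is correct in substance. A few remarks. First, the opening paragraph about extending $\psi$ to an affine isometry $\Psi$ of $u^*_1(\cH)$ and invoking Corollary~\ref{c1} is not used anywhere in the actual identification of $\psi$; continuity of $\psi$ on $\cD^1(\cH)$, which you need in the globalisation step, already follows from the isometry observation in your first sentence. The fix-extreme conclusion is a by-product, and in the paper it is drawn in Theorem~5, not here. Second, your argument is written for $\dim\cH=n<\infty$ (you pick $n^2$ affinely independent pure states, identify the affine chart $\cD^1_{>0}$ with $\C^{n-1}$, and use that $\C^{n-1}\setminus\R^{n-1}$ is connected for $n\ge3$); the statement as phrased in the paper covers infinite-dimensional $\cH$ as well. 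The core of your argument---the two unitary reductions and the coordinate analysis with the test states $\kb{e_i}{e_i}$, $\kb{v_k}{v_k}$, $\kb{w_{jk}}{w_{jk}}$---goes through unchanged in that setting, but the connectivity claim for the globalisation would need to be restated (the real locus has infinite codimension, so the complement remains path-connected). Third, the $n=2$ globalisation via injectivity is a nice touch and is correct: if the two half-planes carried opposite signs, then $\psi(\zr)=\zr^T=\psi(\zr^T)$ for $\zr\neq\zr^T$ would contradict bijectivity.
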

The standard versions of Wigner's Theorem usually consider (unit)
vectors of the Hilbert space rather than pure states. But if $x,y$
are unit vectors representing pure states $\zr_1$ and $\zr_2$,
respectively, then
$$\la\zr_1,\zr_2\ran_*=|\la x,y\ran|^2\,,$$
so that preserving $|\la x,y\ran|$ is the same as preserving
$\la\zr_1,\zr_2\ran_*$. Moreover, any unitary (or antiunitary)
action  in the Hilbert space, $x\mapsto Ux$, induces on pure
states $\zr=\kb{x}{x}$ the action (\ref{wt}) or (\ref{wt1}). We will call the
maps (\ref{wt}) and (\ref{wt1}) defined on pure states or on
$u^*(\cH)$ {\it Wigner maps}. The Wigner maps on
$u^*(\cH)$ can be abstractly characterized as follows.
\begin{theo} A linear map $\psi:u^*(\cH)\ra u^*(\cH)$ is a Wigner map if and only if it is positive and
orthogonal.
\end{theo}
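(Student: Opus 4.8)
\section*{Proof proposal}

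The plan is to prove the two implications separately. The forward direction (Wigner map $\Rightarrow$ positive and orthogonal) is a short verification, while the converse reduces, after one topological observation, to the version of Wigner's Theorem already proved above.

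For the forward direction, suppose $\psi(\zr)=U\zr U^\dag$ for all $\zr$, or $\psi(\zr)=U\zr^T U^\dag$ for all $\zr$, with $U$ unitary, extended by linearity to all of $u^*(\cH)$. Positivity is immediate: conjugation by a unitary preserves positive semi-definiteness and the trace, while transposition preserves the spectrum and the trace, so in either case $\cD(\cH)$ is mapped into (indeed onto) itself. Orthogonality is the computation $\tr\big((UAU^\dag)(UBU^\dag)\big)=\tr(AB)$ and $\tr\big((UA^TU^\dag)(UB^TU^\dag)\big)=\tr(A^TB^T)=\tr\big((BA)^T\big)=\tr(AB)$, valid for all Hermitian $A,B$, which is exactly $\langle\psi(A),\psi(B)\rangle_*=\langle A,B\rangle_*$.

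For the converse, let $\psi:u^*(\cH)\ra u^*(\cH)$ be linear, positive and orthogonal. First I would record two consequences: positivity forces $\psi$ to map $\cD(\cH)$ into itself and to preserve the trace (write a Hermitian operator as a difference of positive ones, on each of which the trace is preserved because its normalization lands in $\cD(\cH)$), and orthogonality makes $\psi$ an invertible isometry of $\big(u^*(\cH),\Vert\cdot\Vert_*\big)$. The elementary fact that drives everything is that the pure states are precisely the density states of norm one: for $\zr=\sum_i\zl_i\kb{i}{i}$ with $\zl_i\ge 0$ and $\sum_i\zl_i=1$ one has $\Vert\zr\Vert_*^2=\sum_i\zl_i^2\le 1$, with equality iff $\zr$ is rank one. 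Hence $\psi$ maps $\cD^1(\cH)$ into itself, and by orthogonality it preserves the transition probabilities $\langle\psi(\zr_1),\psi(\zr_2)\rangle_*=\langle\zr_1,\zr_2\rangle_*$; in particular $\psi$ is injective on pure states, since $\psi(\zr_1)=\psi(\zr_2)$ would give $\langle\zr_1,\zr_2\rangle_*=\Vert\psi(\zr_1)\Vert_*^2=1$, which for pure states forces $\zr_1=\zr_2$.

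The only point that genuinely needs an argument is the surjectivity of $\psi$ on $\cD^1(\cH)$, and this is what I expect to be the main (though still modest) obstacle. Here I would invoke invariance of domain: a continuous injection of the compact connected manifold $\cD^1(\cH)\simeq\mathbb{C}P^{n-1}$ into itself has image that is both open and closed, hence is onto. (Alternatively, being trace-preserving and orthogonal, $\psi$ restricts to a volume-preserving linear injection of the convex body $\cD(\cH)$ inside the affine hyperplane $u^*_1(\cH)$ into itself, hence onto, and therefore permutes its extreme points.) Thus $\psi$ is a bijection of $\cD^1(\cH)$ preserving transition probabilities, so the preceding theorem yields a unitary $U$ with $\psi(\zr)=U\zr U^\dag$ for all pure states, or $\psi(\zr)=U\zr^T U^\dag$ for all pure states. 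It remains only to pass from pure states to all of $u^*(\cH)$: the rank-one projectors $\kb{x}{x}$ span $u^*(\cH)$ as a real vector space (take $x=e_i$, $e_i+e_j$, $e_i+\I e_j$), and since $\psi$ and the corresponding Wigner map are both linear and agree on all such projectors, they coincide on $u^*(\cH)$. Hence $\psi$ is a Wigner map, and everything outside the surjectivity step is bookkeeping or direct quotation of the previous theorem.
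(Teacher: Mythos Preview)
Your argument is correct, but the route to bijectivity on pure states is genuinely different from the paper's.

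You characterize pure states as the elements of $\cD(\cH)$ with $\Vert\cdot\Vert_*=1$, observe that positivity plus orthogonality preserve both conditions, and then settle surjectivity by a topological or volume argument. The paper instead never appeals to topology: it first shows that $\psi^{-1}$ is positive by writing $\psi^{-1}(\zr)=\zr_+-\zr_-$ spectrally, pushing forward by $\psi$, and using that two positive semi-definite operators with $\tr(AB)=0$ have orthogonal ranges; since $\zr$ is a pure state this forces $\psi(\zr_-)=0$, hence $\zr_-=0$. The same reasoning, applied with $\psi$ and $\psi^{-1}$ interchanged, shows that $\psi$ of a pure state cannot be split into two nonzero orthogonal positive parts, so it is rank one, and the norm then fixes the trace. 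Bijectivity on $\cD^1(\cH)$ follows by symmetry, since $\psi^{-1}$ is now also positive and orthogonal.

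Your norm characterization is the cleaner way to get ``pure $\to$ pure''; the paper's indecomposability argument is more roundabout there. Conversely, the paper's proof that $\psi^{-1}$ is positive is purely algebraic and immediately gives surjectivity without invoking invariance of domain or Lebesgue measure, and it works verbatim in infinite dimensions (where your compactness and volume arguments would need reworking). Both proofs then finish identically: preservation of transition probabilities is just orthogonality, Wigner's Theorem gives the form on $\cD^1(\cH)$, and linearity plus the fact that rank-one projectors span $u^*(\cH)$ extend it globally.
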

\begin{proof} The Wigner maps are clearly positive and orthogonal, so let us assume that $\psi$
has these properties. For all Hermitian $\zr_1,\zr_2$ we have
therefore (\ref{tp}) and we know that $\psi(\zr)$ is positive
semi-definite if $\zr$ is. The map $\psi$ is orthogonal, therefore
invertible, and its inverse $\psi^{-1}$ is orthogonal as well. Let
us observe that $\psi^{-1}$ is also a positive map. Take a pure
state $\zr$ and suppose that $\psi^{-1}(\zr)$ has the
spectral decomposition $\psi^{-1}(\zr)=\zr_+-\zr_-$ into a
difference of positive semi-definite operators $\zr_+,\zr_-$ which
are orthogonal, $\bk{\zr_+}{\zr_-}_*=0$. Then $\zr$ is a
difference of orthogonal positive semi-definite operators
$\zr=\psi(\zr_+)-\psi(\zr_-)$ and, as $\zr$ is a pure state,
$\psi(\zr_-)$ (thus $\zr_-$) must be 0. A similar argument shows
that the image $\psi(\zr)$ of any pure state $\zr$ is a positive
semi-definite operator which is not decomposable into a sum of
orthogonal positive semi-definite operators, so $\psi(\zr)$ is a
pure state up to a constant factor. Since
$$\tr(\psi(\zr)^2)=\bk{\psi(\zr)}{\psi(\zr)}=\bk{\zr}{\zr}=1\,,$$
this factor equals 1 and we conclude that $\psi$ induces a
bijection on pure states.
\end{proof}

We will now prove a theorem which extends Wigner's Theorem and
which relates it to the problem of extreme positive maps.

Let $u^*_f(\cH)$ be the linear subspace of $u^*(\cH)$ consisting
of Hermitian finite-rank operators. For $K_1,K_2\subset
u^*_1(\cH)$ we say that a map $\psi:K_1\ra K_2$ is {\it affine} if
$\psi$ is the restriction to $K_1$ of a trace-preserving linear
map $\Phi:<K_1>\ra <K_2>$ from the linear span $<K_1>$ of $K_1$ in
$u^*(\cH)$ into the linear span $<K_2>$ of $K_2$ in $u^*(\cH)$,
$\Phi(K_1)\subset K_2$.

\begin{theo}Let $\psi:\cD^1(\cH)\ra\cD^1(\cH)$ be a bijective map. The following are equivalent:
\begin{itemize}
\item[(a)]  $\psi$ is affine; \item[(b)] $\psi$ preserves
transition
    probabilities between pure states;
\item[(c)] $\psi$ is a Wigner map.
\end{itemize}
If any (or all) of these cases is satisfied, there is a unique
continuous affine extension $\Psi:u^*_1(\cH)\ra u^*_1(\cH)$ of
$\psi$ which is extreme positive, $\Psi\in\P(\cD(\cH))^0$.
\end{theo}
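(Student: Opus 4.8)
The plan is to prove the cycle $(c)\Rightarrow(a)\Rightarrow(b)\Rightarrow(c)$, the first and last implications being straightforward while the middle one carries all of the content, and then to read off the last assertion from case $(c)$. For $(c)\Rightarrow(a)$: a Wigner map is the restriction to $\cD^1(\cH)$ of one of the trace-preserving linear maps $A\mapsto UAU^{\dagger}$ or $A\mapsto UA^{T}U^{\dagger}$ on $\langle\cD^1(\cH)\rangle=u^*_f(\cH)$, and this linear map carries $\cD^1(\cH)$ onto itself, so $\psi$ is affine. For $(b)\Rightarrow(c)$ I simply invoke the Wigner theorem proved above (a transition-probability-preserving bijection of pure states is a Wigner map). (The reverse $(c)\Rightarrow(b)$ is the one-line trace computation, not even needed for the cycle.) So everything reduces to $(a)\Rightarrow(b)$.

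For $(a)\Rightarrow(b)$, let $\Phi: u^*_f(\cH)\to u^*_f(\cH)$ be the trace-preserving linear map restricting to $\psi$ on $\cD^1(\cH)$, so that $\Phi(\cD^1(\cH))=\cD^1(\cH)$. Fix distinct pure states $P=\kb{x}{x}$ and $Q=\kb{y}{y}$; we may assume $x,y$ linearly independent (otherwise $P=Q$), and let $B=\cD(\mathcal{K})$ be the set of density states supported on $\mathcal{K}=\mathrm{span}(x,y)$. Then $B$ is a Euclidean $3$-ball for the metric of $\Vert\cdot\Vert_*$, of radius $1/\sqrt2$ and centred at $c:=I_{\mathcal{K}}/2$, and it contains $P$ and $Q$. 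The heart of the matter is the claim that $\Phi(B)=\cD(\mathcal{K}')$ for some two-dimensional subspace $\mathcal{K}'$, which I would prove in four steps. (i) $\Phi|_B$ is injective: otherwise $\Phi$ is constant on a nondegenerate chord of $B$ and hence sends its two distinct endpoints --- which lie on $\partial B=\cD^1(\mathcal{K})\subset\cD^1(\cH)$ --- to one point, contradicting injectivity of $\psi$; so $\Phi(B)$ is a $3$-dimensional ellipsoid whose extreme points form $\partial\Phi(B)=\Phi(\partial B)$, all of them pure states of $\cH$. (ii) Orthocomplementation in $\mathcal{K}$ is an involution of $\partial B$ all of whose midpoints equal $c$; hence $e\mapsto 2\Phi(c)-e$ preserves the set of extreme points of $\Phi(B)$, and $\Phi(B)$ is centrally symmetric about $\Phi(c)$.

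(iii) Each extreme point $e$ of $\Phi(B)$ has $\Vert e\Vert_*=1$; expanding $\Vert 2\Phi(c)-e\Vert_*^2=1$ and using $\Vert e\Vert_*=1$ yields $\langle\Phi(c),e\rangle_*=\Vert\Phi(c)\Vert_*^2$ for every such $e$. Since $\tr\Phi(c)=1\neq0$, all extreme points --- hence $\Phi(B)$ itself --- lie in one affine hyperplane, and, lying also on $\{\Vert\cdot\Vert_*=1\}$, the extreme points lie on a round sphere inside that hyperplane whose centre is exactly $\Phi(c)$; an ellipsoid concentric with a sphere and with its entire boundary on that sphere is a round ball, so $\Phi(B)$ is a round $3$-ball and $\Phi(\partial B)$ a round $2$-sphere $\Sigma$ of pure states. (iv) Such a $\Sigma$ is a Bloch sphere: its centre $m$, being the midpoint of an antipodal pair of distinct pure states, has rank $2$ with a two-dimensional support $\mathcal{K}'$; since $\Sigma$ is invariant under $R\mapsto 2m-R$ and $0\le R\le 2m$ for every $R\in\Sigma$, each such $R$ is supported on $\mathcal{K}'$, so $\Sigma\subseteq\cD^1(\mathcal{K}')$, and a compact connected $2$-manifold inside the $2$-sphere $\cD^1(\mathcal{K}')$ must be all of it (invariance of domain). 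Hence $\Phi(B)=\mathrm{conv}(\Phi(\partial B))=\cD(\mathcal{K}')$, which is again a Euclidean $3$-ball of radius $1/\sqrt2$. Thus $\Phi|_B: B\to\cD(\mathcal{K}')$ is an affine bijection between Euclidean $3$-balls of equal radius carrying boundary to boundary, hence a $\Vert\cdot\Vert_*$-isometry; in particular $\Vert\Phi(P)-\Phi(Q)\Vert_*=\Vert P-Q\Vert_*$, which --- as $\tr(\Phi(P)^2)=\tr(\Phi(Q)^2)=1$ --- is the same as $\langle\psi(P),\psi(Q)\rangle_*=\langle P,Q\rangle_*$. Since $P,Q$ were arbitrary, $(b)$ holds. (Incidentally, running this over all $\mathcal{K}$ shows $\Phi$ is orthogonal, and a spectral-decomposition argument shows $\Phi$ is positive, so one may alternatively finish via the abstract characterization of Wigner maps proved above.)

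For the last assertion, assume $(c)$, say $\psi(\rho)=U\rho U^{\dagger}$ on pure states (the transpose case is identical), and put $\Psi(A)=UAU^{\dagger}$ for $A\in u^*_1(\cH)$. Then $\Psi$ is continuous, affine, trace-preserving, maps $\cD(\cH)$ into itself and restricts to $\psi$; being a Wigner map with Wigner inverse, it carries $\cD^1(\cH)$ bijectively onto itself, so $\Psi(\cD(\cH))$ contains every extreme point of $\cD(\cH)$ and therefore $\Psi\in\P(\cD(\cH))^0$ by Corollary \ref{c1}. (In infinite dimensions one argues the same thing directly: in any convex decomposition $\Psi=t\Psi_0+(1-t)\Psi_1$ with $\Psi_i\in\P(\cD(\cH))$ the maps $\Psi_i$ must agree with $\Psi$ on every pure state, hence on the dense affine hull of $\cD^1(\cH)$, hence everywhere by continuity.) Uniqueness follows because two continuous affine extensions of $\psi$ coincide on $\cD^1(\cH)$, whose affine hull is dense in $u^*_1(\cH)$. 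The main obstacle is steps (iii)--(iv): one must rule out that the linear extension of $\psi$ deforms a Bloch-ball face into a fatter or merely ellipsoidal body --- this is exactly where the rigidity underlying Wigner's theorem enters --- while the remaining steps are routine once that is in place.
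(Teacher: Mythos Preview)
Your argument for $(a)\Rightarrow(b)$ is correct, but it follows a genuinely different route from the paper's. The paper proceeds \emph{globally}: it first shows, by an induction on rank, that the linear extension $\Phi$ is a rank-preserving bijection of $u^*_f(\cH)$, then proves a key lemma characterising $\Vert\rho\Vert_*^2$ as the maximum of $\sum_i\lambda_i^2$ over convex decompositions into pure states (maximum attained at the spectral decomposition), and uses this lemma together with invertibility to show that $\Phi$ preserves the Hilbert--Schmidt norm on all finite-rank density states; polarisation then yields preservation of transition probabilities. Your argument, by contrast, is entirely \emph{local}: you fix two pure states, look at the Bloch $3$-ball $B=\cD(\mathcal{K})$ they span, and show with a direct Euclidean-geometric argument (central symmetry of the image ellipsoid, its boundary lying on a concentric round sphere, and a support argument forcing the image pure-state sphere into a single $\cD^1(\mathcal{K}')$) that $\Phi$ carries $B$ isometrically onto another Bloch ball. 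This never invokes rank preservation on higher-rank states, never needs global invertibility of $\Phi$, and replaces the paper's combinatorial lemma about $\sum\lambda_i^2$ by the rigidity ``ellipsoid with boundary on a concentric sphere $\Rightarrow$ ball''. The trade-off: the paper's approach is more algebraic and yields as a by-product that $\Phi$ is a rank-preserving isomorphism of the whole of $u^*_f(\cH)$ (a stronger structural fact), whereas yours is more visual, reduces everything to the qubit case, and gets straight to the transition-probability statement without the inductive scaffolding. Both establish the implication cleanly; the remaining parts $(c)\Rightarrow(a)$, $(b)\Rightarrow(c)$ and the final extremality/uniqueness claim are handled essentially the same way in both proofs.
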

\begin{proof}
$(a)\Rightarrow(b)$. Since $u^*_f(\cH)$ is spanned by the set
$\cD^1(\cH)$ of pure states, let $\Phi:u^*_f(\cH)\ra u^*_f(\cH)$
be the (unique) linear trace-preserving map on the space
$u^*_f(\cH)$ of finite-rank Hermitian operators inducing $\psi$ on
$\cD^1(\cH)$. Since $\Phi$ maps convex combinations into convex
combinations, $\Phi$ maps finite-rank density states into
finite-rank density states, so $\Phi$ is a positive map. We will
prove that $\Phi$ is a linear isomorphism on $u^*_f(\cH)$. This
follows from the fact that $\Phi$ preserves the rank, i.e. induces
a bijection on $\cD^k(\cH)$ for each $k=1,2,\dots$.

To see the latter let us remark that the rank, $rank(\zr)$, of
$\zr\in u^*_f(\cH)$ is defined as a minimal number of pure states
whose linear combination is $\zr$, so that, as $\Phi$ is linear
and is a bijection on pure states, $rank(\Phi(\zr))\le rank(\zr)$.
Conversely, if $\zr\in\cD(\cH)$ is of rank $k$, then it is a
convex combination of some pure states $\zr_1,\dots,\zr_k$, thus
the image by $\Phi$ of a convex combination of pure states
$\Phi^{-1}(\zr_1),\dots,\Phi^{-1}(\zr_k)$. This shows that
$\cD^k(\cH)\subset\Phi(\cD^k(\cH))$. The linear map $\Phi$ induces
a bijection on $\cD^1(\cH)$, so assume inductively that it induces
a bijection on $\cD^l(\cH)$ for $l\le k$. Let now $\zr\in
u^*_f(\cH)$ be of rank $k+1$, $\zr\in \cD^{k+1}(\cH)$, with the
spectral decomposition $\zr=\sum_{i=0}^k\zl_i\zr_i$, with
$\zl_i>0$, $\sum_i\zl_i=1$, and $\zr_i=\kb{x_i}{x_i}$ being
pairwise orthogonal pure states, $\bk{x_i}{x_j}=0$, $i\ne j$. We
must show that the rank of $\Phi(\zr)$ is $k+1$.

Suppose the contrary. Hence, according to the inductive
assumption, $\zr'=\Phi(\zr)$ is of rank $k$. As
$\zr=\zl_0\zr_0+\wt{\zr}$, where
$\wt{\zr}=\sum_{i=1}^k\zl_i\zr_i$, thus $\wt{\zr}'=\Phi(\wt{\zr})$
is of rank $k$, the image of $\Phi(\zr_0)=\kb{x'_0}{x'_0}$, thus
$x'_0$ belongs to the image of $\wt{\zr}'$. Consider the spectral
decomposition $\wt{\zr}'=\sum_{i=1}^k\zl'_i\zr'_i$, with
$\zl'_i>0$, $\sum_i\zl'_i=1$, and $\zr'_i=\kb{x'_i}{x'_i}$ being
pairwise orthogonal pure states, $\bk{x'_i}{x'_j}=0$, $i\ne j$.
Let $\cH_0$ (resp., $\cH'_0$) be the subspace in $\cH$ spanned by
the vectors $x_1,\dots,x_k$ (resp., $x'_1,\dots,x'_k$) and let
$\cD^1(\cH_0)$ (resp., $\cD^1(\cH'_0)$) be the set of all pure
states of $\cH_0$ (resp., $\cH'_0$), i.e. these pure states from
$\cD^1(\cH)$ which are represented by unit vectors from $\cH_0$
(resp., $\cH'_0$). It is now clear that
$\zr'_0=\Phi(\zr_0)\in\cD^1(\cH'_0)$. Note that pure states $\zh$
from $\cD^1(\cH'_0)$ can be characterized as such pure states
which added to $\wt{\zr}'$ do not change the rank,
$rank(\wt{\zr}'+\zh)=rank(\wt{\zr}')=k$. According to the
inductive assumption this implies that
$rank(\wt{\zr}+\Phi^{-1}(\zh))=k$ as well, but
$$rank(\wt{\zr}+\Phi^{-1}(\zr'_0))=rank(\wt{\zr}+\zr_0)=rank(\zr)=k+1\,,$$
which is a contradiction.

Since we know now that $\Phi$ induces bijections on each
$\cD^k(\cH)$, $k=1,2,\dots$, it is easy to conclude that it is a
rank-preserving isomorphism, so that $\Phi^{-1}$ is also a
positive map. Indeed, as $\cD^1(\cH)$ spans $u^*_f(\cH)$, it is
clearly "onto". It is also injective, since
$\Phi(\zl\zr-\zl'\zr')=0$, where $\zl,\zl'>0$ and
$\zr,\zr'\in\cD(\cH)$ are density states of finite ranks, implies
($\Phi$ is positive) that $\Phi(\zr)=\Phi(\zr')=0$, thus
$\zr=\zr'=0$, as $\Phi$ preserves the rank of density states.

To finish the proof, we will need the following lemma.
\begin{lem} Let $\zr\in\cD(\cH)$ be a density state of a finite rank $k$.
Then the square of the Hilbert-Schmidt norm
$\Vert\zr\Vert_*^2=\tr(\zr^2)$ can be characterized as the maximum
of the expressions $\sum_{i=1}^k\zl_i^2$ over all decompositions
$\zr=\sum_{i=1}^k\zl_i\zr_i$ of $\zr$ as a convex combination of
$k$ pure states $\zr_1,\zr_2,\dots,\zr_k$. This maximum is
associated with the spectral decomposition.
\end{lem}
\begin{proof} As
\be\label{hs1}\tr(\zr^2)=\sum_{i}\zl_i^2+2\sum_{i\ne
j}\zl_i\zl_j\tr(\zr_i\zr_j), \ee and \be\label{hs0}\sum_{i\ne
j}\zl_i\zl_j\tr(\zr_i\zr_j)\ge 0\,, \ee we have
\be\label{hs}\Vert\zr\Vert_*^2\ge\sum_{i=1}^k\zl_i^2\,. \ee
Moreover, we have equality in (\ref{hs}) if and only if we have
equality in (\ref{hs0}), so as all $\zl_i>0$, if and only if
$\bk{\zr_i}{\zr_j}_*=\tr(\zr_i\zr_j)=0$ for all $i\ne j$. This
corresponds to the spectral decomposition.
\end{proof}

The above lemma implies that the map $\Phi$ preserves the
Hilbert-Schmidt norm of density states. Indeed, if we use the
spectral decomposition to write $\zr$ as a convex combination
$\zr=\sum_{i=1}^k\zl_i\zr_i$ of pure states, then $\Phi(\zr)$ can
be expressed as a convex combination of pure states
$\Phi(\zr)=\sum_{i=1}^k\zl_i\Phi(\zr_i)$ with the same
coefficients, so that
$\Vert\Phi(\zr)\Vert_*^2\ge\sum_i\zl_i^2=\Vert\zr\Vert_*^2$. But
we can apply the above consideration to $\Phi^{-1}$ instead of
$\Phi$ and get $\Vert\Phi^{-1}(\zh)\Vert_*^2\ge\Vert\zh\Vert_*^2$
for any density state of rank $k$, in particular for
$\zh=\Phi(\zr)$. We get therefore
$\Vert\Phi(\zr)\Vert_*^2=\Vert\zr\Vert_*^2$.

Let us now take two pure states $\zr_1,\zr_2$ and consider
$\zr=\frac{1}{2}(\zr_1+\zr_2)$. Since, according to (\ref{hs1}),
$$\Vert\zr\Vert_*^2=\frac{1}{2}\left(1+\bk{\zr_1}{\zr_2}_*\right),$$
and
$$\Vert\zr\Vert_*^2=\Vert\Phi(\zr)\Vert_*^2=\Vert\frac{1}{2}
\left(\Phi(\zr_1)+\Phi(\zr_1)\right)\Vert_*^2\,,$$
 we have
$$\frac{1}{2}\left(1+\bk{\zr_1}{\zr_2}_*\right)=\frac{1}{2}
\left(1+\bk{\Phi(\zr_1)}{\Phi(\zr_2)}_*\right)\,,$$ thus
$$\bk{\zr_1}{\zr_2}_*=\bk{\Phi(\zr_1)}{\Phi(\zr_2)}_*=\bk{\psi(\zr_1)}{\psi(\zr_2)}_*\,,$$
so $\psi$ preserves transition probabilities between pure states.

$(b)\Rightarrow(c)$ is the Wigner's Theorem.

$(c)\Rightarrow(a)$ is obvious.

Moreover, $\psi$ has an obvious unique continuous extension
$\Psi:u^*_1(\cH)\ra u^*_1(\cH)$, $\Psi(\zr)=U\zr U^\dag$ or
$\Psi(\zr)=U\zr^T U^\dag$. Since $\Psi$ is positive and has all
extreme points in its image, it is extreme positive according to
the obvious infinite-dimensional version of Corollary \ref{c1}.
\end{proof}

If the dimension of the Hilbert space $\cH$ is $n$, then the
dimension of the affine space $u^*_1(\cH)$ of Hermitian operators
with trace 1 equals $n^2-1$ and we know from the general theory
(Theorem \ref{t0}) that a positive map $\Phi\in\P(\cD(\cH))$
possessing $n^2$ pure states in the image $\Phi(\cD(\cH))$ of
$\cD(\cH)$ is extreme positive (we called such positive maps
fix-extreme). We finish this section with the following conjecture
motivated by Theorem \ref{t1}.
\begin{conjecture} Any fix-extreme positive map $\Phi\in\P(\cD(\cH))^0$ is a Wigner map.
\end{conjecture}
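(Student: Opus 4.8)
The plan is to mimic the proof of Theorem~\ref{t1}: reduce the conjecture to the assertion that a fix-extreme positive map of $\cD(\cH)$ is orthogonal, and then appeal to the theorem above identifying the Wigner maps with the positive orthogonal maps of $u^*(\cH)$.

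Let $\Phi\in\P(\cD(\cH))^0$ be fix-extreme, so there are pure states $q_1,\dots,q_{n^2}$ with $p_i=\Phi(q_i)$ affinely independent; here I use, as at the start of the proof of Theorem~\ref{t1}, that the preimages of extreme points in the image may be chosen extreme. Since $\Phi$ is affine and the $p_i$ are affinely independent, so are the $q_i$; hence $\Phi$ carries an affine frame of $u^*_1(\cH)$ onto an affine frame, $v(\Phi)$ is invertible, and $\Phi$ is an affine automorphism of $u^*_1(\cH)$ with $\Phi(\cD(\cH))\subseteq\cD(\cH)$. First I would observe that, for such a $\Phi$, the conjecture is equivalent to the equality $\Phi(\cD(\cH))=\cD(\cH)$: if this holds then $\Phi$ is an affine automorphism of the convex body $\cD(\cH)$, hence permutes its extreme points and restricts to a bijection of $\cD^1(\cH)$; this restriction is affine in the sense of Section~\ref{sec:wigner}, so the implication $(a)\Rightarrow(c)$ of the theorem above forces $\Phi$ to be a Wigner map (or: such a $\Phi$ fixes the centroid $I/n$ and preserves the John ellipsoid of $\cD(\cH)$, which --- the conjugation action of $\cU(\cH)$ on traceless Hermitian operators being irreducible --- is a Hilbert--Schmidt ball about $I/n$, so $\Phi$ is orthogonal). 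The converse is clear.

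To prove $\Phi(\cD(\cH))=\cD(\cH)$ I would reproduce the mechanism of Theorem~\ref{t1} together with its Lemma. Consider the quadratic function $\rho\mapsto\Vert\Phi(\rho)\Vert_*^2=\tr(\Phi(\rho)^2)$ on $u^*_1(\cH)$, restricted to the set $\cD^1(\cH)\simeq\C P^{n-1}$ of pure states. Because $\Phi$ is positive and a density state satisfies $\Vert\cdot\Vert_*^2\le 1$ with equality exactly on $\cD^1(\cH)$, this function is bounded by $1$ on $\cD^1(\cH)$ and equals $1$ at the $n^2$ affinely independent points $q_i$, where it therefore has local maxima. If one can conclude, as in the Lemma preceding Theorem~\ref{t1}, that the function is constantly $1$, then $\Phi$ maps pure states to pure states; since $v(\Phi)$ is invertible, $\Phi$ then maps $\cD^1(\cH)$ bijectively onto itself (its image is open by invariance of domain, closed by compactness, and $\cD^1(\cH)$ is connected), hence $\Phi(\cD(\cH))=\cD(\cH)$. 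For $n=2$ this is exactly that Lemma, with $\cD^1(\cH)=S^2$; equivalently, $\Phi(\cD(\cH))$ is then an ellipsoid inscribed in the Bloch ball touching its boundary sphere at four affinely independent points, which forces it to be the whole ball. A parallel route is to show that $|\det v(\Phi)|=1$: since $\Phi(\cD(\cH))\subseteq\cD(\cH)$ this determinant is at most $1$, and equality forces the two convex bodies, one inside the other with the same volume, to coincide.

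The main obstacle is precisely this step for $n\ge 3$: one needs the replacement of the Lemma of Theorem~\ref{t1}, valid on the round sphere, by a rigidity statement on the variety $\C P^{n-1}$ of rank-one projectors inside the $(n^2-1)$-dimensional affine space $u^*_1(\cH)$ --- namely that a quadratic function on $u^*_1(\cH)$ restricted to $\C P^{n-1}$ cannot attain its maximum value at $n^2$ affinely independent points unless it is constant (equivalently, that $\cD(\cH)$ admits no proper affine self-embedding whose image carries $n^2$ affinely independent extreme points of $\cD(\cH)$ on its boundary). For $n=2$ the Lagrange-multiplier analysis in the proof of the Lemma closes the argument; for $\C P^{n-1}$ the corresponding study of critical configurations of $\Phi$ on a complex projective space, where the Fubini--Study geometry and the Hilbert--Schmidt structure interact and there is no one-variable reduction, is the genuinely hard point, which is why the statement is offered here as a conjecture rather than a theorem.
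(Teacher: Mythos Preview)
The statement is not proved in the paper: it is explicitly presented as a \emph{Conjecture}, motivated by Theorem~\ref{t1}, and left open. There is therefore no proof in the paper against which to compare your attempt.

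Your proposal is not a proof either, and you are candid about this in the final paragraph. What you have written is a coherent reduction: you show that the conjecture is equivalent to $\Phi(\cD(\cH))=\cD(\cH)$, and that this in turn would follow from a rigidity statement asserting that the quadratic function $\zr\mapsto\Vert\Phi(\zr)\Vert_*^2$ cannot attain its maximum on $\cD^1(\cH)\simeq\C P^{n-1}$ at $n^2$ affinely independent points unless it is constant there. The intermediate steps---that the preimages $q_i$ may be taken pure, that affine independence of the $p_i$ forces that of the $q_i$ and hence invertibility of $v(\Phi)$, the John-ellipsoid argument via irreducibility of the conjugation action on traceless Hermitian operators, and the invariance-of-domain step once $\Phi$ is known to send pure states to pure states---are all sound.

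For $n=2$ your scheme closes completely, since $\cD^1(\cH)=S^2$ and the required rigidity is exactly the Lemma preceding Theorem~\ref{t1}; this is consistent with the paper, which proves Theorem~\ref{t1} for balls and then states the conjecture as the expected analogue for $\cD(\cH)$. For $n\ge 3$ you correctly isolate the missing ingredient: a Lagrange-multiplier or critical-point analysis on $\C P^{n-1}$ replacing the one-variable reduction available on the round sphere. That gap is real and is precisely why the paper offers the statement as a conjecture rather than a theorem; your write-up does not close it, but it accurately identifies where the difficulty lies.
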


\section{Completely positive maps acting on pure states}\label{sec:CP}

A linear map $A: u^\ast(\mathcal{H})\rightarrow
u^\ast(\mathcal{H})$ is called {\it completely positive} (CP) if
$A\otimes I_N: u^\ast(\mathcal{H})\otimes M_N\rightarrow
u^\ast(\mathcal{H})\otimes M_N$, where $M_N$ is the algebra of
complex $N\times N$ matrices, is positivity-preserving for all
$N$. It was shown by Choi \cite{choi75} and Kraus \cite{kraus83}
that each CP map admits a representation in the so called Kraus
form
\begin{equation}\label{cp}
A\rho=\sum_{i=1}^sV_i^\dagger\rho V_i,
\end{equation}
where $V_k$ are operators acting on $\mathcal{H}$.

\begin{define}We say that a CP operator (\ref{cp})
acting on Hermitian operators of a Hilbert space $\mathcal{H}$ is
{\it extreme} if any decomposition, $A=A_1+\cdots+A_r$, into a sum
of CP operators $A_1,\ldots,A_r$ is {\it irrelevant}, i.e. all the
operators $A_1,\ldots,A_r$ are proportional to $A$, $A_k=a_kA$,
for some $a_k\in\mathbb{R}_+$. In particular, all operators $V_i$
are proportional and $A$ can be written as a single Kraus operator
\[
A\rho=V^\dagger\rho V.
\]
\end{define}

Not to deal with the cone of CP operators but rather with a
compact convex set (if the dimension of $\cH$ is finite), one has
to put certain normalization condition. We can try to use, for
instance, the trace $\tr(A)$ of a CP operator (\ref{cp}) as a
linear operator on the complex vector space $gl(\cH)$ which is the
same as the trace of (\ref{cp}) as an operator on the real vector
space $u^*(\cH)$ of Hermitian operators.
\begin{theo} If $A$ is a CP operator in the form (\ref{cp}) then
\be\label{tr}\tr(A)=\sum_{i=1}^s|\tr(V_i)|^2 \,. \ee
\end{theo}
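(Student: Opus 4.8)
The plan is to compute $\tr(A)$ directly. Recall that the trace of a linear operator on $gl(\cH)$ does not depend on the basis and may be evaluated in the Hilbert--Schmidt orthonormal basis of matrix units $E_{kl}=\kb{e_k}{e_l}$, $1\le k,l\le n$, formed from the fixed orthonormal basis $e_1,\dots,e_n$ of $\cH$ (so that $\langle E_{kl},E_{k'l'}\rangle=\tr(E_{kl}^\dagger E_{k'l'})=\delta_{kk'}\delta_{ll'}$). Thus $\tr(A)=\sum_{k,l}\tr\!\big(E_{kl}^\dagger\,A(E_{kl})\big)$, and since both this expression and the Kraus sum (\ref{cp}) are linear in the individual summands, it suffices to prove $\tr(A_i)=|\tr(V_i)|^2$ for a single term $A_i\colon\rho\mapsto V_i^\dagger\rho V_i$.

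The key observation is that $A_i=L_{V_i^\dagger}\circ R_{V_i}$ is the composition of left multiplication by $V_i^\dagger$ with right multiplication by $V_i$, and that these commute. Under the canonical identification $gl(\cH)\cong\cH\ot\cH^*$ (sending $\kb{x}{y}$ to $x\ot\bra{y}$), left multiplication by $X$ corresponds to $X\ot I$ and right multiplication by $Y$ corresponds to $I\ot Y^T$, the transpose acting on the dual basis; hence $A_i$ corresponds to $V_i^\dagger\ot V_i^T$ and
\[
\tr(A_i)=\tr(V_i^\dagger)\cdot\tr(V_i^T)=\overline{\tr(V_i)}\cdot\tr(V_i)=|\tr(V_i)|^2 .
\]
The same conclusion can be reached without tensor notation by the bare computation $E_{kl}^\dagger A_i(E_{kl})=E_{lk}V_i^\dagger E_{kl}V_i=\overline{(V_i)_{kk}}\,E_{ll}V_i$, whose trace is $\overline{(V_i)_{kk}}\,(V_i)_{ll}$; the double sum over $k$ and $l$ then factorizes as $\big(\sum_k\overline{(V_i)_{kk}}\big)\big(\sum_l(V_i)_{ll}\big)=|\tr(V_i)|^2$. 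Summing over $i$ gives $\tr(A)=\sum_{i=1}^s|\tr(V_i)|^2$.

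Lastly I would remark that this number is simultaneously the trace of $A$ as a complex-linear operator on $gl(\cH)$ and as a real-linear operator on $u^*(\cH)$, as asserted in the text: being positive, $A$ maps Hermitian operators to Hermitian operators, so its matrix relative to any real-orthonormal basis of $u^*(\cH)$ --- which is at the same time a complex basis of $gl(\cH)$ --- is real, and both traces equal the sum of those diagonal entries. I do not expect a genuine obstacle here; the only points requiring a little care are getting the transpose right when identifying $R_{V_i}$ with an operator on $\cH\ot\cH^*$ (equivalently, keeping track of which index of $V_i$ is contracted in the index computation), the conceptual heart of the matter being the elementary identity $\tr(L_X\circ R_Y)=\tr(X)\,\tr(Y)$.
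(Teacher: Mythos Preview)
Your proof is correct and, in its direct-computation version, essentially identical to the paper's: both reduce to a single Kraus term, work in the matrix-unit basis $\kb{e_k}{e_l}$, and obtain $\sum_{k,l}\overline{v_{kk}}\,v_{ll}=|\tr(V)|^2$. Your tensor-product observation $A_i\cong V_i^\dagger\ot V_i^T$ is a clean conceptual repackaging of the same calculation that the paper does not include, and your closing remark justifying the equality of the real and complex traces fills in a point the paper merely asserts.
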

\begin{proof} It suffices to prove (\ref{tr}) for a single Kraus map $A=M_V$.
Choose an orthonormal basis $\{e_i\}$ in $\cH$ and write $V$ in
this basis as a complex matrix $V=(v_{ij})$. As an orthonormal
basis in $gl(\cH)$ we can take $\zr_{jk}=\kb{e_j}{e_k}$. We have
\beas\tr(M_V)&=&\sum_{j,k}\bk{\zr_{jk}}{V^\dag\zr_{jk}V}=
\sum_{j,k}\tr\left(\zr_{kj}V^\dag\zr_{jk}V\right)\\
&=&
\sum_{j,k}\bk{e_k}{\zr_{kj}V^\dag\zr_{jk}Ve_k}=\sum_{j,k}\bar{v}_{jj}v_{kk}=|\tr(V)|^2\,.
\eeas
\end{proof}

As we can see, the trace can vanish for non-zero CP operators
which makes the normalization by trace impossible. There is,
however, another possibility of normalizing CP operators provided
by the {\it Jamio\l\-kowski isomorphism} \cite{jamiolkowski72}
\be\label{j} \cJ:gl(gl(\cH))\ra gl(gl(\cH))\,. \ee The Jamio\l
kowski isomorphism maps the operator $M_A^B(\zr)=A\zr B^\dag$ into
the rank-one operator $\kb{A}{B}$. In particular, CP operators
correspond, {\it via} the Jamio\l kowski isomorphism $\cJ$, to
positive semi-definite operators on $gl(\cH)$ (see, for instance,
\cite{gkm07}). Any Kraus operator $M_V(\rho)=V\rho V^\dag$
corresponds to the one dimensional Hermitian operator $\kb{V}{V}$.
The spectral decomposition of $\cJ(A)$ results in the
decomposition (\ref{cp}) with $V_i$ being mutually orthogonal with
respect to the Hilbert-Schmidt product $\tr(A^\dag B)$ in
$gl(\cH)$. Such a decomposition of a CP operator we will call a
{\it spectral decomposition}. We will call a CP operator {\it
normalized} if it corresponds, {\it via} the Jamio\l kowski
isomorphism, to a trace-one operator. If (\ref{cp}) is a spectral
decomposition of a CP operator, then it is normalized if
$$\tr\left(\sum_{i=1}^s V_i^\dag V_i\right)=1\,.$$
We will call $\tr_s(A)=\tr\left(\sum_{i=1}^s V_i^\dag V_i\right)$
the {\it spectral trace} of a CP operator $A$. The spectral trace
is greater or equal to 0 and equals 0 only if $A=0$. The convex
set of normalized CP operators on $gl(\cH)$ we will denote by
$NCP(\cH)$. It is clear that extreme points in $NCP(\cH)$ are
exactly single normalized Kraus maps.

From now on we assume the Hilbert space $\mathcal{H}$ to be of a
finite dimension $n$.

Recall that on the space of Hermitian operators on $\mathcal{H}$
we have a canonical scalar product
$\langle\rho,\rho^\prime\rangle_*=\tr\rho\rho^\prime$ and that
$\cP(\cH)$ denotes the cone of positive semi-definite operators.
We will call elements $\rho_1,\ldots,\rho_k$, $k\ge n$, of
$\cP(\cH)$ to be {\it in general position} if for any non-zero
$\rho\in \cP(\cH)$ we cannot find $n$ of them which are orthogonal
to $\rho$. If $\rho_i$ are of rank-one, $\rho_i=\kb{x_i}{x_i}$,
where $\ket{x_i}\in\mathcal{H}$, this simply means that any $n$
vectors of $\ket{x_1},\ldots,\ket{x_k}$ form a linear basis in
$\mathcal{H}$.
\begin{theo}\label{theo:1}
The following statements are equivalent:
\begin{description}
\item{(a)} $A$ is invertible and $A^{-1}$ is a CP operator.
\item{(b)}
    $A$ is invertible and extreme, i.e.
\[
A\rho=V^\dagger\rho V
\]
with $V$ - invertible. \item{(c)} For any set of pure states
$\rho_1,\ldots,\rho_{n+1}$ in general position,
$A\rho_1,\ldots,A\rho_{n+1}$ are operators of rank one in general
position. \item{(d)} In the image $A(P(\cH))$ there are $n+1$
rank-one operators in
    general position.

\item{(e)} There exists a set of pure states
$\rho_1,\ldots,\rho_{n+1}$ in
    general position such that $A\rho_1,\ldots,A\rho_{n+1}$ are operators
    of rank one in general position.
\end{description}
\end{theo}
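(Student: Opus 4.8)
The plan is to prove the cyclic chain $(a)\Rightarrow(c)\Rightarrow(d)\Rightarrow(e)\Rightarrow(b)\Rightarrow(a)$. Two links are immediate: $(c)\Rightarrow(d)$ by applying (c) to any one family of pure states in general position (say $\kb{e_i}{e_i}$ for an orthonormal basis together with $\kb{v}{v}$, $v=e_1+\cdots+e_n$), and $(b)\Rightarrow(a)$ because $A\rho=V^\dagger\rho V$ with $V$ invertible gives $A^{-1}\rho=(V^{-1})^\dagger\rho V^{-1}$, again a single Kraus term, hence CP. The rest rests on two elementary remarks. (i) For a Kraus decomposition $A\rho=\sum_iV_i^\dagger\rho V_i$ one has $A\kb{x}{x}=\sum_i\kb{V_i^\dagger x}{V_i^\dagger x}$; thus $A\kb{x}{x}$ has rank one precisely when all nonzero vectors among the $V_i^\dagger x$ are parallel, and writing $w$ for their common direction, $V_i^\dagger x=c_i w$ with the $c_i$ not all zero. (ii) The adjoint $A^*\sigma=\sum_iV_i\sigma V_i^\dagger$ of $A$ with respect to $\langle\cdot,\cdot\rangle_*$ is again CP, and invertible whenever $A$ is.

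$(a)\Rightarrow(c)$: fix pure states $\rho_1,\dots,\rho_{n+1}$ in general position. Each $A\rho_j$ is positive semi-definite (as $A$ is CP) and nonzero (as $A$ is invertible); if it had rank $\ge 2$, split it as $A\rho_j=\tau+\tau'$ with $\tau,\tau'$ positive semi-definite and $\tau$ not proportional to $A\rho_j$, and apply the CP map $A^{-1}$ to write $\rho_j=A^{-1}\tau+A^{-1}\tau'$ as a sum of two cone elements. Since $\rho_j$ generates an extreme ray of $\cP(\cH)$, both summands are nonnegative multiples of $\rho_j$, so $\tau=A(A^{-1}\tau)$ is a multiple of $A\rho_j$ --- a contradiction. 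Hence every $A\rho_j$ has rank one. And if a nonzero positive semi-definite $\sigma$ were orthogonal to $n$ of the $A\rho_j$, then $\langle\rho_j,A^*\sigma\rangle_*=\langle A\rho_j,\sigma\rangle_*=0$ for those $n$ indices, with $A^*\sigma$ a nonzero positive semi-definite operator --- contradicting the general position of the $\rho_j$. So $A\rho_1,\dots,A\rho_{n+1}$ are rank-one operators in general position.

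$(d)\Rightarrow(e)$: let $\sigma_1,\dots,\sigma_{n+1}\in A(\cP(\cH))$ be rank-one operators in general position, and write $\sigma_j=A\tau_j$ with $\tau_j$ positive semi-definite. Decomposing each $\tau_j$ spectrally into a positive combination of rank-one projectors, $A$ sends each projector into the cone, and since $\sigma_j$ (being rank one) generates an extreme ray of $\cP(\cH)$, at least one of these projectors, call it $\rho_j=\kb{x_j}{x_j}$, is mapped by $A$ to a strictly positive multiple of $\sigma_j$. Then the $A\rho_j$ have the same directions as the $\sigma_j$ and so are rank-one operators in general position. Finally, with $V_i^\dagger x_j=c_{ij}w_j$ as in remark (i) ($w_j$ the direction of $A\rho_j\neq 0$, the $c_{ij}$ not all zero for fixed $j$) and $w_1,\dots,w_n$ a basis, any nontrivial relation $\sum_{k=1}^{n}a_kx_k=0$ would, after applying $V_i^\dagger$, give $\sum_{k=1}^{n}a_kc_{ik}w_k=0$, hence $a_kc_{ik}=0$ for all $i,k$; for an index $k$ with $a_k\neq 0$ this forces $c_{ik}=0$ for all $i$, i.e.\ $A\rho_k=0$ --- impossible. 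So any $n$ of the $x_j$ are linearly independent, i.e.\ $\rho_1,\dots,\rho_{n+1}$ are in general position.

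$(e)\Rightarrow(b)$ (the crux): take pure states $\rho_j=\kb{x_j}{x_j}$ in general position with $A\rho_j$ of rank one in general position, and fix a Kraus decomposition, so $V_i^\dagger x_j=c_{ij}w_j$ with $w_j$ the direction of $A\rho_j$. By general position, $x_1,\dots,x_n$ and $w_1,\dots,w_n$ are bases, and in $x_{n+1}=\sum_{k=1}^{n}a_kx_k$, $w_{n+1}=\sum_{k=1}^{n}b_kw_k$ every coefficient is nonzero. Applying $V_i^\dagger$ to the first expansion gives $\sum_{k=1}^{n}a_kc_{ik}w_k=c_{i,n+1}\sum_{k=1}^{n}b_kw_k$, and comparing coordinates in the basis $(w_k)$ yields $a_kc_{ik}=c_{i,n+1}b_k$, so $c_{ik}=\lambda_i\,b_k/a_k$ with $\lambda_i:=c_{i,n+1}$. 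Hence $V_i^\dagger x_k=\lambda_i(b_k/a_k)w_k=\lambda_i Wx_k$, where $W$ is defined on the basis $(x_k)$ by $Wx_k=(b_k/a_k)w_k$ --- invertible, since it carries a basis to nonzero multiples of a basis. Thus $V_i^\dagger=\lambda_i W$ for all $i$, so $A\rho=\bigl(\sum_i|\lambda_i|^2\bigr)W\rho W^\dagger=V^\dagger\rho V$ with $V=\bigl(\sum_i|\lambda_i|^2\bigr)^{1/2}W^\dagger$ invertible (the scalar being positive as the $A\rho_j$ are nonzero), which is (b). The main obstacle is exactly this implication: converting the bare existence of $n+1$ well-placed rank-one images into the rigid statement that all Kraus operators of $A$ are a single common multiple of one invertible operator. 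What drives it is the projective-frame behaviour of the two families $(x_j)$ and $(w_j)$ --- the non-vanishing of every coordinate of the $(n+1)$-st vector --- together with the auxiliary fact, used in $(a)\Rightarrow(c)$ and $(d)\Rightarrow(e)$, that general position is carried forward by $A$ (via complete positivity of $A^{-1}$ and of $A^*$) as well as reflected back to its source.
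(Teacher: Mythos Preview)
Your proof is correct and follows the same overall architecture as the paper's, with the cycle ordered slightly differently. The substantive difference is in how you pass from (a) to the rest: the paper proves $(a)\Rightarrow(b)$ by citing an earlier result (\cite{grabowski06}, Theorem~7) and then does $(b)\Rightarrow(c)$ trivially; you instead prove $(a)\Rightarrow(c)$ directly, using the extreme-ray property of rank-one operators together with complete positivity of $A^{-1}$ (to force rank-one images) and of the Hilbert--Schmidt adjoint $A^*$ (to push general position forward). This is more self-contained than invoking the external reference, and the use of $A^*$ is a tidy device not present in the paper. The remaining steps --- $(d)\Rightarrow(e)$ and the crux $(e)\Rightarrow(b)$ --- match the paper's arguments essentially verbatim up to notation; just note that in your $(d)\Rightarrow(e)$ the linear-independence argument for $x_1,\dots,x_n$ applies verbatim to any $n$-element subset of indices (since the $w_j$ are in general position), which justifies your final sentence there.
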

\begin{proof}

(a)$\Rightarrow$(b). It was proven in \cite{grabowski06} (Theorem
7).

(b)$\Rightarrow$(c). If $\rho=\kb{x}{x}$ then $V^\dagger\rho
V=\kb{V^\dagger x}{V^\dagger x}$, hence $A$ maps pure states into
rank-one operators, and $V$, being invertible, preserves all
linear independencies.

(c)$\Rightarrow$(d) - trivial.

(d)$\Rightarrow$(e). Let $\rho_1,\ldots,\rho_{n+1}\in P(\cH)$ be
positive semi-definite operators such that $\eta_j=A\rho_j$,
$j=1,\ldots,n+1$, are rank-one operators in general position.
First, let us remark that we can assume that $\rho_j$ are density
states, since proportionality plays no role here. Second, we can
assume further that they are pure. Indeed, if $\rho$ is a state
with the spectral decomposition $\rho=\sum_k a_k\xi_k$ into a
convex combination of pure states $\xi_k$ and if $A\rho=\eta$ is
rank-one positive semi-definite operator, then $\eta=A\rho=\sum_k
a_kA\xi_k$ is a convex combination of rank-one positive
semi-definite operators $A\xi_k$, so $\eta$ is positively
proportional to $A\xi_k$ for all $k$, since pure
states are extreme points in the convex body of all states. By a
similar argument, all states $V_i^\dagger\rho_j V_i$,
$i=1,\ldots,s$, are proportional to $\eta_j$, say
$V_i^\dagger\rho_j V_i=\beta_j\eta_j$, where, of course,
$\sum_{i=1}^s\beta_i^j=1$.

Let us write $\rho_j=\kb{x_j}{x_j}$, $\eta_j=\kb{y_j}{y_j}$,
$j=1,\ldots,n+1$, for some vectors $\ket{x_j}$, $\ket{y_j}$. We
claim that the states $\rho_1,\ldots,\rho_{n+1}$ are in general
position, i.e. any $n$ among the vectors
$\ket{x_1},\ldots,\ket{x_{n+1}}$ are linearly independent.

For, assume the contrary, i.e.\ that, say
$\ket{x_1},\ldots,\ket{x_n}$, are linearly dependent. Hence, one
of the vectors, say $\ket{x_n}$, can be written as a linear
combination $\ket{x_n}=b_1\ket{x_1}+\cdots b_{n-1}\ket{x_{n-1}}$.
Since $V_i^\dagger\rho_j V_i=\ket{V_i^\dagger x_j}\bra{V_i^\dagger
x_j}$ is proportional to $\eta_j=\kb{y_j}{y_j}$, the vector
$\ket{V_i^\dagger x_j}$ is proportional to $\ket{y_j}$ for
$j=1,\ldots,n+1$. In particular, all the vectors $\ket{V_i^\dagger
x_j}$ with $i=1,\ldots,s$ and $j=1,\ldots,n-1$, belong to the
linear span $span\langle \ket{y_1},\ldots,\ket{y_{n-1}}\rangle$ of
the vectors $\ket{y_1},\ldots,\ket{y_{n-1}}$. Therefore
\[
V_i^\dagger\ket{x_n}=b_1V_i^\dagger\ket{x_1}+\cdots+b_{n-1}V_i^\dagger\ket{x_{n-1}}\in
span\langle \ket{y_1},\ldots,\ket{y_{n-1}}\rangle.
\]
Since $V_i^\dagger\ket{x_n}$ is proportional to $\ket{y_n}$, and
$\ket{y_1},\ldots,\ket{y_n}$ are linearly independent, we have
$V_i^\dagger\ket{x_n}=0$ for all $i=1,\ldots,s$. But this
contradicts the property $A\rho_n=\eta_n$, i.e.
$\sum_{i=1}^sV_i^\dagger\ket{x_n}=\ket{y_n}$.

(e)$\Rightarrow$(a). Put $\rho_j=\kb{x_j}{x_j}$,
$\eta_j=A\rho_j=\kb{y_j}{y_j}$ for some
$\ket{x_j},\ket{y_j}\in\mathcal{H}$, $j=1,\ldots,n+1$. Since
$\eta_j=A\rho_j$ is of rank one, all the positive semi-definite
operators $V_i^\dagger\rho_j V_i$ of its decomposition
\[
\eta_j=\sum_{i=1}^s V_i^\dagger\rho_j V_i,
\]
must be proportional to $\eta_j$, $V_i^\dagger\rho_j
V_i\sim\eta_j$. This, in turn, means that $V_i^\dagger\ket{x_j}$
are proportional to $\ket{y_j}$,
\[
V_i^\dagger\ket{x_j}=\alpha_i^j\ket{y_j}, \quad i=1,\ldots,s,
\quad j=1,\ldots,n+1.
\]
As any $n$ vectors of $\ket{x_1},\ldots,\ket{x_{n+1}}$ are
linearly independent, all the coefficients $a_j$ of the
decomposition
\[
\ket{x_{n+1}}=a_1\ket{x_1}+\cdots +a_n\ket{x_n}
\]
are non-zero, $a_j\ne 0$. Since
\[
V_i^\dagger\ket{x_{n+1}}=a_1V_i^\dagger\ket{x_1}+\cdots
+a_nV_i^\dagger\ket{x_n}= a_1\alpha_i^1\ket{y_1}+\cdots
+a_n\alpha_i^n\ket{y_n}
\]
are proportional to $\ket{y_{n+1}}$ and $\ket{y_{n+1}}$ has a
decomposition $\ket{y_{n+1}}=b_1\ket{y_1}+\cdots +b_n\ket{y_n}$
into a linear combination of linearly independent
$\ket{y_1},\ldots,\ket{y_n}$, the vector
$\left(\alpha_i^1,\ldots,\alpha_i^n\right)\in\mathbb{C}^n$ must be
proportional to
$\left(a_1/b_1,\ldots,a_n/b_n\right)\in\mathbb{C}^n$. In
consequence, it means that the operators $V_i^\dagger$ are
proportional to the operator $V^\dagger$ uniquely defined by the
conditions
\[
V^\dagger\ket{x_j}=\frac{b_j}{a_j}\ket{y_j}.
\]
Hence
\[
A\rho=\beta V^\dagger\rho V
\]
for some $\beta\in\mathbb{R}_+$ and, clearly, $\beta\ne 0$.
\end{proof}
\begin{re}
It is not enough to take only $n$ pure states
$\rho_1,\ldots,\rho_n$. Let us take, for example, $V_i$ diagonal,
$V_i=\mathrm{diag}\left(\lambda_i^1,\ldots,\lambda_i^n\right)$,
$\lambda_j^k\ne 0$, and not proportional. If now
$\lambda_i^1=\lambda_i^2$, we have an infinite set of pure states
that are mapped to rank-one operators and any $n$ of them are not
in general position, but $A=\sum_{i=1}^sV_i^\dagger\rho V_i$ is
not extreme.
\end{re}

\begin{cor}\label{theo:2} If a CP operator
\begin{equation}\label{}
A\rho=\sum_{i=1}^s V_i^\dagger\rho V_i
\end{equation}
is trace-preserving (resp., unity-preserving) and such that the
image of density states $A(\mathcal{D})$ contains $n+1$ pure
states in general position, then $A$ is unitary,
$A\rho=U^\dagger\rho U$.
\end{cor}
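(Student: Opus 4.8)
The plan is to obtain this as a direct corollary of Theorem~\ref{theo:1} followed by a one-line computation using the normalization. First I would note that every pure state is a rank-one element of $\cP(\cH)$ and that $\cD\subseteq\cP(\cH)$, so the $n+1$ pure states lying in $A(\cD)$ in particular furnish $n+1$ rank-one operators in $A(\cP(\cH))$; moreover ``being in general position'' is a property of operators in $\cP(\cH)$ that is unaffected by multiplication by positive scalars, so these $n+1$ operators are in general position exactly in the sense used in Theorem~\ref{theo:1}. Hence the hypothesis is precisely condition (d) of that theorem, and by the equivalence (d)$\,\Leftrightarrow\,$(b) proved there, $A$ must be invertible and extreme, so it can be written as a single Kraus map $A\rho=V^\dagger\rho V$ with $V$ invertible.

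Next I would feed in the normalization. If $A$ is trace-preserving, then $\tr(\rho)=\tr(A\rho)=\tr(V^\dagger\rho V)=\tr(\rho\,VV^\dagger)$ for every Hermitian $\rho$, so $\tr\big(\rho(VV^\dagger-I)\big)=0$ for all such $\rho$; since $VV^\dagger-I$ is Hermitian and the Hilbert--Schmidt scalar product $\langle\cdot,\cdot\rangle_*$ on Hermitian operators is non-degenerate, this forces $VV^\dagger=I$, which in the finite-dimensional space $\cH$ makes $V$ unitary. If instead $A$ is unity-preserving, then $I=A(I)=V^\dagger V$, and again $V$ is unitary. In either case, setting $U=V$ yields $A\rho=U^\dagger\rho U$ with $U$ unitary, which is the assertion.

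I do not expect any real obstacle: the argument is purely a matter of quoting Theorem~\ref{theo:1} and then computing. The only step deserving a moment of attention is the reduction in the first paragraph — checking that ``$n+1$ pure states in general position in the image $A(\cD)$'' is literally clause (d) of Theorem~\ref{theo:1} — and this is immediate once one recalls that general position for rank-one operators $\kb{x_i}{x_i}$ means linear independence of any $n$ of the vectors $\ket{x_i}$, a condition that does not see the normalization $\tr=1$.
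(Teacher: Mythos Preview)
Your proposal is correct and follows essentially the same approach as the paper: invoke Theorem~\ref{theo:1} to reduce to a single Kraus map $A\rho=V^\dagger\rho V$ with $V$ invertible, then use the trace-preserving (resp.\ unity-preserving) condition to conclude $VV^\dagger=I$ (resp.\ $V^\dagger V=I$), hence $V$ is unitary. The paper's proof is a two-line sketch of exactly this; your version simply fills in the check that the hypothesis is an instance of condition (d) and spells out the non-degeneracy argument for $VV^\dagger=I$.
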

\begin{proof}It is enough to make use of Theorem~\ref{theo:1} and observe that trace-preserving (unity-preserving) yields
$VV^\dag=I$ (resp., $V^\dag V=I$), so $V$ is unitary.

\end{proof}

\section{Extreme bistochastic CP maps}\label{sec:extremeCP}

Extreme unity-preserving CP maps have been described by Man-Duen
Choi \cite{choi75}. We can reformulate his result as follows.
\begin{theo}\label{choi1} Let $NCP_I(\cH)$ (resp., $NCP_\tr(\cH)$) be the convex body of normalized unity-preserving
(resp. trace-preserving) CP maps on $gl(\cH)$. Then $A\in
NCP_I(\cH)$ (resp. $A\in NCP_\tr(\cH)$), with the spectral
decomposition
\be\label{bi}A\rho=\sum_{i=1}^s V_i^\dagger\rho V_i \ee is extreme if and
only if the operators $\{ V_i^\dag V_j: i,j=1,\dots,s\}$ (resp.,
$\{ V_i V_j^\dag: i,j=1,\dots,s\}$) are linearly independent in
$gl(\cH)$.
\end{theo}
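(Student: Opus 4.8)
The plan is to use the Jamio\l kowski isomorphism to translate the statement about a CP map $A$ with spectral decomposition $A\rho=\sum_i V_i^\dagger\rho V_i$ into a statement about the positive semi-definite operator $\cJ(A)=\sum_i\kb{V_i}{V_i}$ on $gl(\cH)$, whose spectral decomposition has mutually orthogonal rank-one summands $\kb{V_i}{V_i}$. The key observation is that the extra constraints defining $NCP_I(\cH)$ or $NCP_\tr(\cH)$ --- the conditions $\sum_i V_i V_i^\dagger = I$ (unity-preserving) or $\sum_i V_i^\dagger V_i = I$ (trace-preserving) --- are \emph{affine} conditions on $\cJ(A)$. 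Indeed, the partial trace $\tr_{\cH}(\cJ(A))$ over the first (resp.\ second) factor of $gl(\cH)\cong\cH\ot\cH^*$ equals $\sum_i V_i V_i^\dagger$ (resp.\ $\sum_i \ol{V_i}^{\,T}\ol{V_i}$ or an analogous expression), so the constraint is $\mathrm{Tr}_1\cJ(A)=I$ (resp.\ $\mathrm{Tr}_2\cJ(A)=I$); together with normalization this cuts out an affine slice of the cone of positive operators. Hence $A$ is extreme in $NCP_I(\cH)$ precisely when the positive operator $\cJ(A)$ is not a nontrivial convex combination of two positive operators satisfying the same partial-trace constraint.

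Next I would run the standard perturbation argument. Write $P=\cJ(A)=\sum_i\kb{V_i}{V_i}$, and suppose $P=\frac12(P_+ + P_-)$ with $P_\pm\ge 0$ satisfying the same constraint. Then $P_\pm = P \pm \Delta$ for a Hermitian $\Delta$ with $\mathrm{Tr}_j\Delta=0$. Positivity of $P\pm\varepsilon\Delta$ for small $\varepsilon$ forces $\mathrm{range}(\Delta)\subseteq\mathrm{range}(P)=span\langle V_1,\dots,V_s\rangle$, so $\Delta=\sum_{i,j}c_{ij}\kb{V_i}{V_j}$ for a Hermitian matrix $(c_{ij})$. Thus the admissible perturbations $\Delta$ are in bijection with Hermitian $s\times s$ matrices $(c_{ij})$ satisfying the linear constraint coming from $\mathrm{Tr}_j\Delta=0$. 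Now compute: $\mathrm{Tr}_1\bigl(\kb{V_i}{V_j}\bigr)$ --- under the identification turning $M^B_A:\rho\mapsto A\rho B^\dagger$ into $\kb{A}{B}$ --- is precisely (up to the conventions fixed earlier) $V_i V_j^\dagger$ in the unity-preserving case. So the constraint $\mathrm{Tr}_1\Delta=0$ reads $\sum_{i,j}c_{ij}V_i V_j^\dagger = 0$. Therefore nonzero admissible $\Delta$ exist if and only if the operators $\{V_i V_j^\dagger\}$ are linearly dependent in $gl(\cH)$ --- and $A$ is extreme iff no such nonzero $\Delta$ exists, i.e.\ iff $\{V_i V_j^\dagger : i,j=1,\dots,s\}$ are linearly independent. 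The trace-preserving case is identical with $\{V_i^\dagger V_j\}$ in place of $\{V_i V_j^\dagger\}$ (one uses $\mathrm{Tr}_2$ instead of $\mathrm{Tr}_1$).

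There is one subtlety to dispatch: extremality of $A$ in the convex body $NCP_I(\cH)$ is, a priori, about decompositions of the \emph{CP map}, and one must check it is equivalent to the non-existence of the Hermitian perturbations $\Delta$ above. This is where the affine/linear correspondence from Section~\ref{sec:positivemaps} does the work: $\cJ$ is a linear isomorphism carrying the compact convex body $NCP_I(\cH)$ onto the compact convex set of positive operators on $gl(\cH)$ with $\mathrm{Tr}_1(\cdot)=I$ and spectral trace $1$, so extreme points correspond to extreme points, and the perturbation characterization of extreme points of a convex body cut out of the positive cone by affine constraints is standard. One should also note that the hypothesis that (\ref{bi}) is a \emph{spectral} decomposition (the $V_i$ mutually Hilbert--Schmidt orthogonal) guarantees $\kb{V_i}{V_i}$ are the genuine spectral projectors of $P$, so that $\mathrm{range}(P)=span\langle V_1,\dots,V_s\rangle$ has exactly dimension $s$ and the matrices $(c_{ij})$ are unconstrained apart from the partial-trace condition.

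The main obstacle I expect is bookkeeping the conventions: pinning down exactly which partial trace of $\cJ(A)$ gives $\sum_i V_i V_i^\dagger$ versus $\sum_i V_i^\dagger V_i$ (and whether a transpose/complex conjugation sneaks in, as it does in equation (\ref{al}) for antilinear maps), so that the linear-dependence condition comes out as $\{V_i V_j^\dagger\}$ for unity-preserving and $\{V_i^\dagger V_j\}$ for trace-preserving, matching the statement. Once the dictionary between $\mathrm{Tr}_j\kb{V_i}{V_j}$ and the operator products $V_iV_j^\dagger$ (resp.\ $V_i^\dagger V_j$) is fixed correctly, the rest is the routine extreme-point-of-an-affine-section-of-a-cone argument sketched above.
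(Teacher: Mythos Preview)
Your approach is essentially identical to the paper's: both pass through the Jamio\l kowski isomorphism, recast the unity-/trace-preserving constraint as a linear map $F_j$ on $u^*(gl(\cH))$, note that extremality in the affine slice is equivalent to $F_j$ having trivial kernel on the tangent space of the face at $\cJ(A)$ (i.e.\ on Hermitian operators $\sum_{i,j}c_{ij}\kb{V_i}{V_j}$ with range contained in $\mathrm{range}\,\cJ(A)$), and then read off the linear-independence condition. Your flagged bookkeeping worry is the only thing to pin down, and in fact your intermediate assignment is swapped: with the paper's Kraus convention $A\rho=\sum_iV_i^\dagger\rho V_i$, unity-preserving means $\sum_iV_i^\dagger V_i=I$ (not $\sum_iV_iV_i^\dagger=I$), so the linear-independence of $\{V_i^\dagger V_j\}$ goes with $NCP_I(\cH)$ as stated.
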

\begin{proof} We will sketch a proof making use of the Jamio\l kowski isomorphism (\ref{j}) which associates with the CP
map (\ref{bi}) the Hermitian operator
$$\cJ(A)=\sum_{i=1}^s \kb{V_i^\dagger}{V_i^\dagger}$$
on $gl(\cH)$. Extreme normalized CP operators correspond
therefore, {\it via} the Jamio\l kowski isomorphism, to pure
states on the Hilbert space $gl(\cH)$. On the space $u^*(gl(\cH))$
of Hermitian operators acting on the Hilbert space $gl(\cH)$, in
turn, we can define two canonical $\R$-linear maps
$F_1,F_2:u^*(gl(\cH))\ra u^*(\cH)$ which associate with rank-one
operators $\kb{V}{V}$ the operators $F_1(\kb{V}{V})=VV^\dag$ and
$F_2(\kb{V}{V})=V^\dag V$, respectively. The unity-preserving CP
operators $A$ correspond therefore to Hermitian operators
constrained by the equations $F_1(\cJ(A))=I$. The corresponding
extreme points $\cJ(A)\in u^*(gl(\cH))$ need not be pure
states. It suffices that the level sets of $I$ of the linear
constraint $F_1$ are transversal to the face of the point, i.e.
the function $F_1$ has the trivial kernel on the tangent space
$T_{\cJ(A)}$ of the face at $\cJ(A)$. But this tangent space is
known (see e.g. \cite{gkm05,grabowski06}) to consists of all
Hermitian operators with the range equal to the range of $\cJ(A)$,
i.e.\ the operators of the form
$\sum_{i,j}\zl^{ij}\kb{V_i^\dag}{V_j^\dag}$, where $(\zl^{ij})$ is
a Hermitian matrix. Hence $A$ is extremal in $NCP_I(\cH)$ if and
only if
\be\label{choi}F_1(\sum_{i,j}\zl^{ij}\kb{V_i^\dag}{V_j^\dag})=\sum_{i,j}\zl^{ij}{V_i^\dag}{V_j}\ne
0\in u^*(\cH) \ee for all Hermitian $(\zl^{ij})\ne 0$. As we can
decompose any operator into the sum of a Hermitian and an
antihermitian ones, we can rewrite (\ref{choi}) in $gl(\cH)$
instead of $u^*(\cH)$ using an arbitrary complex matrix
$(\zl^{ij})\ne 0$. This is nothing but the complex linear
independence of the operators $\{ V_i^\dag V_j: i,j=1,\dots,s\}$
in $gl(\cH)$. For trace-preserving CP operators the reasoning is
identical with the function $F_2$ replacing $F_1$.
\end{proof}

The maps from $NCP_*(\cH)=NCP_I(\cH)\cap NCP_\tr(\cH)$ are
sometimes called {\it bistochastic}. The above understanding of
the Choi's result gives us easily a characterization of extreme
bistochastic maps.
\begin{theo} A bistochastic map $A\in NCP_*(\cH)$ with
the spectral decomposition
$$A\rho=\sum_{i=1}^s V_i^\dagger\rho V_i$$
is extreme if and only if the operators $\{ V_i^\dag V_j\oplus V_i
V_j^\dag: i,j=1,\dots,s\}$ are linearly independent in
$gl(\cH\oplus\cH)$.
\end{theo}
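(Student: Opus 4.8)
The plan is to run the same argument as in the proof of Theorem~\ref{choi1}, now imposing both normalization-type constraints simultaneously. Recall that the Jamio\l kowski isomorphism $\cJ$ sends the CP map $A\rho=\sum_{i=1}^s V_i^\dag\rho V_i$ to the positive semi-definite operator $\cJ(A)=\sum_{i=1}^s\kb{V_i^\dag}{V_i^\dag}$ on $gl(\cH)$, and that the two canonical $\R$-linear maps $F_1,F_2:u^*(gl(\cH))\ra u^*(\cH)$ act on rank-one operators by $F_1(\kb{V}{V})=VV^\dag$ and $F_2(\kb{V}{V})=V^\dag V$. A bistochastic map $A\in NCP_*(\cH)=NCP_I(\cH)\cap NCP_\tr(\cH)$ is exactly one whose image $\cJ(A)$ satisfies $F_1(\cJ(A))=I$ \emph{and} $F_2(\cJ(A))=I$. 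Combining these into a single $\R$-linear map $F=F_1\oplus F_2:u^*(gl(\cH))\ra u^*(\cH)\oplus u^*(\cH)$, and identifying $u^*(\cH)\oplus u^*(\cH)$ with the block-diagonal Hermitian operators on $\cH\oplus\cH$, the convex body $NCP_*(\cH)$ corresponds under $\cJ$ to the positive semi-definite operators on $gl(\cH)$ lying on the affine level set $F(\cdot)=I\oplus I$.

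Next I would repeat the transversality analysis of Theorem~\ref{choi1}. The point $\cJ(A)$ is an extreme point of this sliced convex body if and only if the level set $\{F=I\oplus I\}$ is transversal, at $\cJ(A)$, to the face of $\cJ(A)$ in the cone of positive semi-definite operators on $gl(\cH)$; equivalently, $F$ restricted to the tangent space $T_{\cJ(A)}$ of that face has trivial kernel. By the description recalled there (see \cite{gkm05,grabowski06}), $T_{\cJ(A)}$ consists of all Hermitian operators with range equal to the range of $\cJ(A)$, i.e.\ of the operators $\sum_{i,j}\zl^{ij}\kb{V_i^\dag}{V_j^\dag}$ with $(\zl^{ij})$ an arbitrary Hermitian matrix. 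Applying $F$ to such an operator gives
\[
F\Big(\sum_{i,j}\zl^{ij}\kb{V_i^\dag}{V_j^\dag}\Big)=\Big(\sum_{i,j}\zl^{ij}V_i^\dag V_j\Big)\oplus\Big(\sum_{i,j}\zl^{ij}V_i V_j^\dag\Big)\in u^*(\cH)\oplus u^*(\cH).
\]
Hence $A$ is extreme in $NCP_*(\cH)$ if and only if this expression is nonzero for every nonzero Hermitian $(\zl^{ij})$. Finally, decomposing an arbitrary complex matrix $(\zl^{ij})$ into a Hermitian plus $\I$ times a Hermitian matrix --- exactly as in the last step of the proof of Theorem~\ref{choi1} --- upgrades the Hermitian-coefficient condition to a complex-coefficient one, which is precisely the statement that the operators $\{V_i^\dag V_j\oplus V_i V_j^\dag:i,j=1,\dots,s\}$ are linearly independent in $gl(\cH\oplus\cH)$.

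The step requiring the most care is the extremality reduction for the \emph{intersection} $NCP_I(\cH)\cap NCP_\tr(\cH)$: one must justify that a point is extreme in a face sliced by two affine constraints precisely when the intersection of the joint kernel of those constraints with the tangent space of the face is zero. The one subtlety is that the constraints $F_1(\cJ(A))=I$ and $F_2(\cJ(A))=I$ are not independent --- they both force $\tr\cJ(A)=n$ --- so $F$ is not surjective onto $u^*(\cH)\oplus u^*(\cH)$; but this redundancy is harmless, since extremality only sees the kernel of $F$ on $T_{\cJ(A)}$, and the argument of Theorem~\ref{choi1} goes through verbatim with $F_1$ replaced by $F=F_1\oplus F_2$. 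The remaining computations (the displayed formula for $F$ on the tangent space and the Hermitian-to-complex passage) are routine and identical in spirit to the bistochastic-free case.
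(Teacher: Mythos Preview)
Your proposal is correct and follows essentially the same route as the paper's proof: replace the single constraint $F_1$ of Theorem~\ref{choi1} by the joint constraint $F=(F_1,F_2):u^*(gl(\cH))\to u^*(\cH)\oplus u^*(\cH)$, rerun the transversality argument on the tangent space of the face at $\cJ(A)$, and then pass from Hermitian to complex coefficient matrices. Your observation about the redundancy of the two constraints (both forcing $\tr\cJ(A)=n$) is a nice extra remark not made explicit in the paper, and as you note it is harmless for the kernel computation.
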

\begin{proof} The proof is analogous to the above one with the difference that our constraint function is now
$$(F_1,F_2):u^*(gl(\cH))\ra u^*(\cH)\ti u^*(\cH)\simeq u^*(\cH)\oplus u^*(\cH)\,.$$
The condition (\ref{choi}) is therefore replaced by the condition
$$\sum_{i,j}\zl^{ij}{V_i^\dag}{V_j}\ne 0\quad\text{or}\quad \sum_{i,j}\zl^{ij}{V_i}{V^\dag_j}\ne 0
\quad\text{for all Hermitian }(\zl^{ij})\ne 0\,. $$ We can rewrite
it as
$$\sum_{i,j}\zl^{ij}\left({V_i^\dag}{V_j}\oplus{V_i}{V^\dag_j}\right)\ne 0$$
and pass to an arbitrary complex $(\zl^{ij})\ne 0$ like before.
\end{proof}

\section{Examples}\label{sec:examples}

Let us illustrate some of the previous reasonings and results in
the simplest cases of maps on states on two- and three-dimensional
spaces. In the following subsection we show three examples of
extreme maps: a generic one possessing exactly two pure states in
its image, a non-generic one with only one pure state in the
image, and an extreme map having a continuous family of pure
states in the image. The last one, according to
Theorem~\ref{theo:2}, is not completely positive but a merely
positive extreme map. Note that these cases have been considered
also in \cite{ruskai02}.

In the second subsection we give an example of an extreme
completely positive map acting on $\mathbb{C}^{3\times 3}$ which
does not have any pure state in its image. Such a situation is
impossible for maps acting on qubits (i.e. maps on
$\mathbb{C}^{2\times 2}$).

\subsection{Extreme completely positive, positive, stochastic and bistochastic
maps for $n=2$}

A state on $\mathbb{C}^2$ can be parameterized by a unit vector
$(x,y,z)\in\mathbb{R}^3$
\begin{equation}\label{}
\rho=\frac{1}{2}\left(I+x\sigma_1+y\sigma_2+z\sigma_3\right),
\end{equation}
where
\begin{equation}\label{}
\sigma_1=\left[
\begin{array}{cc}
 0 & 1 \\
 1 & 0 \\
\end{array}
\right], \quad \sigma_2=\left[
\begin{array}{cc}
 0 & -i \\
 i & 0 \\
\end{array}
\right], \quad \sigma_3=\left[
\begin{array}{cc}
 1 & 0 \\
 0 & -1 \\
\end{array}
\right].
\end{equation}
A positive trace-preserving map
\begin{equation}\label{2dimmap}
\rho\mapsto
\rho^\prime=\frac{1}{2}\left(I+x^\prime\sigma_1+y^\prime\sigma_2+z^\prime\sigma_3\right)
\end{equation}
is thus determined (up to rotations which are irrelevant from the point of view of this paper) by four parameters
$\lambda_1,\lambda_2,\lambda_3,t$, such that
\begin{equation}\label{}
x^\prime=\lambda_1 x,\quad y^\prime=\lambda_2 y,\quad
z^\prime=\lambda_3 z+t.
\end{equation}
The parameters $\lambda_1,\lambda_2,\lambda_3,t$ must fulfil
particular conditions to ensure the positivity of the map (see
\cite{fujiwara99}, \cite{ruskai02} for details).

The image of the unit sphere $x^2+y^2+z^2=1$ under (\ref{2dimmap})
is the ellipsoid
\begin{equation}\label{ell}
\left(\frac{x}{\lambda_1}\right)^2+\left(\frac{y}{\lambda_2}\right)^2+\left(\frac{z-t}{\lambda_3}
\right)^2=1.
\end{equation}
Obviously, for a positive map (\ref{2dimmap}) the ellipsoid is
inside the unit sphere. For extreme maps it has to have points on
its surface common with the surface of the unit sphere (i.e. some
pure states are mapped into pure states). For extreme CP maps two
possibilities occur \cite{ruskai02}:
\begin{enumerate}
\item $\lambda_1=\cos(u)$, $\lambda_2=\cos(v)$,
    $\lambda_3=\cos(u)\cos(v)$, $t=\sin(u)\sin(v)$, $<0<u<v$. In this
    case the ellipsoid (\ref{ell}) has three different axes and it
    touches the unit sphere at two points (see
    Fig.~\ref{fig:cp-extreme-generic}).

\vskip 2cm

\begin{figure}[H]
\begin{center}
\includegraphics[scale=0.4]{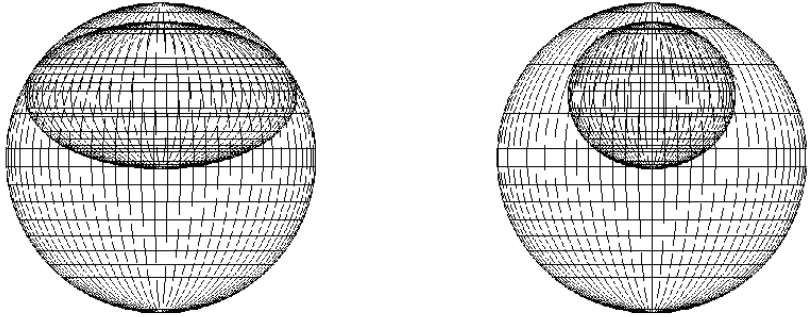}
\caption{An extreme CP map having exactly two pure states in the
image. Projections along two perpendicular axes.}
\label{fig:cp-extreme-generic}
\end{center}
\end{figure}

\item $\lambda_1=\lambda_2=\cos(u)$, $\lambda_3=\cos^2(u)$,
    $t=\sin^2(u)$, in which case the ellipsoid (\ref{ell}) touches the
    unit sphere at a single point $x=y=0$, $z=1$. (see
    Fig.~\ref{fig:cp-extreme-singular}).

\vskip 2cm

\begin{figure}[H]
\begin{center}
\includegraphics[scale=0.4]{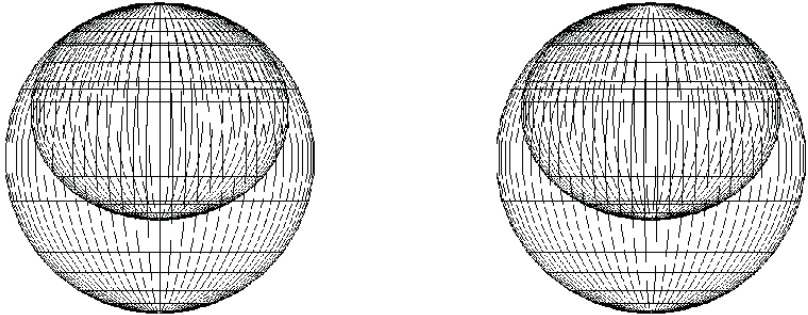}
\caption{An extreme CP map having exactly one pure state in the
image. Projections along two perpendicular axes.}
\label{fig:cp-extreme-singular}
\end{center}
\end{figure}

\item Geometrically it is obvious that without upsetting the
extremality
    of the map we can make the ellipsoid (\ref{ell}) touching the unit
    sphere along a full circle (see Fig.~\ref{fig:ncp-extreme}). In this
    case
\beas&\lambda_1=\lambda_2=\sqrt{1-\cos^2(u)\cos^2(v)}\,,\\
&\lambda_3=\sin(u)\sqrt{1-\cos^2(u)\cos^2(v)}\,,\\
&t=\sin(u)\sin^2(v)\,. \eeas For $u\ne 0\ne v$ the map is
definitely not a unitary one (its image is a proper subset of the
unit sphere) and in its image there are more than 3 pure states
(in fact the whole circle of states at which the ellipsoid touches
the unit sphere). From Theorem~\ref{theo:2} it follows thus that
the map cannot be a completely positive one. Indeed, for the
chosen values of $\lambda_1,\lambda_2,\lambda_3$, and $t$ the map
is an extreme positive \cite{gorini76}. The fact that it is not
completely positive can be checked independently by finding that
its image under the Jamio{\l}kowski isomorphism
\cite{jamiolkowski72} is not positive semi-definite.

\vskip 2cm

\begin{figure}[H]
\begin{center}
\includegraphics[scale=0.4]{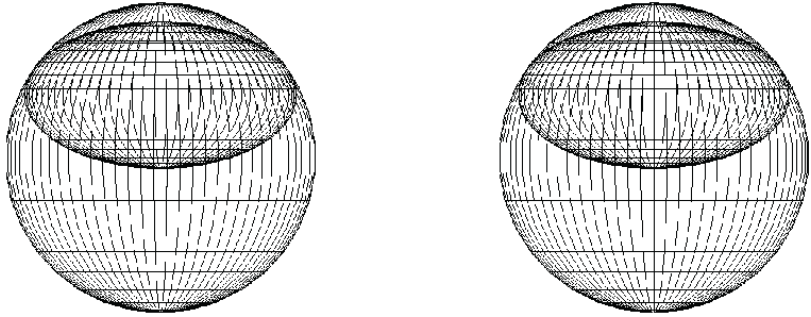}
\caption{An extreme positive map having a continuous family of
pure states in its image. Projections along two perpendicular
axes.} \label{fig:ncp-extreme}
\end{center}
\end{figure}

\end{enumerate}

\subsection{An extreme completely positive map having no pure states in its image}

Let us consider a CP map on $\mathbb{C}^{3\times 3}$ defined by
the following Kraus operators

\begin{equation}
V_1= \left[ \begin {array}{ccc} 1/\sqrt {3}&0&0\\
\noalign{\medskip}0&1/\sqrt {2}&0\\
\noalign{\medskip}0&0&{\frac {1}{\sqrt {1+{\alpha}^{2}}}}
\end {array} \right],
\quad
V_2=\left[ \begin {array}{ccc} 0&1/\sqrt {3}&0\\
\noalign{\medskip}0&0&1/\sqrt {2}\\
\noalign{\medskip}{\frac {\alpha}{\sqrt {1+{\alpha}^{2
}}}}&0&0\end {array} \right], \quad
V_3= \left[ \begin {array}{ccc} 0&0&1/\sqrt {3}\\
\noalign{\medskip}0&0&0\\
\noalign{\medskip}0&0&0\end {array} \right].
\end{equation}
A straightforward calculation gives
$V_1V_1^\dagger+V_2V_2^\dagger+V_3V_3^\dagger=I$.

For $\alpha=0$ the matrices $V_iV_j^\dagger$ form a basis in the
space of $3\times 3$ matrices, hence it is also true for small
$\alpha$. The map $A\rho=\sum_{i=1}^3 V_i^\dagger\rho V_i$ is thus
an extreme CP map (theorem \ref{choi1}). For $\alpha\ne 0$ there
is no $\ket{y}$ such that $V_i^\dagger\ket{x}\sim\ket{y}$ for some
$\ket{x}$ and $i=1,2,3$, hence $A$ does not send any pure state
into a pure one.

\section{Acknowledgements}
We thank G.~Esposito for reading the manuscript and making useful suggestions. M. K. acknowledges the support by the LFPPI Network financed by the
Polish Ministry of Science and Higher Education and the EU Program
SCALA (IST-2004-015714).

%\bibliographystyle{unsrt}
%\bibliography{wigner}

\end{document}